\documentclass[letterpaper,11pt]{article}

 
\usepackage{prettyref}
\usepackage{amsmath}
\usepackage{amsfonts}
\usepackage{amsthm}
\usepackage{amssymb}
\usepackage{url,ifthen}
\usepackage{srcltx}
\usepackage{multirow}
\usepackage{boxedminipage}
\usepackage[margin=1.125in]{geometry}
\usepackage{nicefrac}
\usepackage{xspace}
\usepackage{graphicx}
\usepackage{xcolor}
\definecolor{DarkGreen}{rgb}{0.1,0.5,0.1}
\definecolor{DarkRed}{rgb}{0.5,0.1,0.1}
\definecolor{DarkBlue}{rgb}{0.1,0.1,0.5}
\usepackage{tikz}
\usetikzlibrary{shapes,arrows}

\usepackage[small]{caption}

\usepackage[pdftex]{hyperref}
\hypersetup{
    unicode=false,          
    pdftoolbar=true,        
    pdfmenubar=true,        
    pdffitwindow=false,      
    pdfnewwindow=true,      
    colorlinks=true,       
    linkcolor=DarkBlue,          
    citecolor=DarkGreen,        
    filecolor=DarkGreen,      
    urlcolor=DarkBlue,          
    %
    %
    pdftitle={},
    pdfauthor={},    
}

\def\draft{0} 

\def\submit{0} 

\ifnum\draft=1 
    \def\ShowAuthNotes{1}
\else
    \def\ShowAuthNotes{0}
\fi

\ifnum\submit=1
\newcommand{\forsubmit}[1]{#1}
\newcommand{\forreals}[1]{}
\else
\newcommand{\forreals}[1]{#1}
\newcommand{\forsubmit}[1]{}
\fi

\ifnum\ShowAuthNotes=1
\newcommand{\authnote}[2]{{ \footnotesize \bf{\color{DarkRed}[#1's Note:
{\color{DarkBlue}#2}]}}}
\else
\newcommand{\authnote}[2]{}
\fi

\newcommand{\Rnote}[1]{{\authnote{Raghu} {#1}}}
\newcommand{\Bnote}[1]{{\authnote{Ben} {#1}}}

\newcommand{\hilight}[1]{#1}

%
%

\newtheorem{theorem}{Theorem}[section]

\newtheorem{lemma}[theorem]{Lemma}

\newtheorem{claim}[theorem]{Claim}

\newtheorem{conjecture}[theorem]{Conjecture}
\theoremstyle{definition}
\newtheorem{definition}[theorem]{Definition}

\newtheorem{question}[theorem]{Question}

%
%

\newcommand{\chapterref}[1]{\hyperref[ch:#1]{Chapter~\ref{ch:#1}}}

\newcommand{\claimref}[1]{\hyperref[claim:#1]{Claim~\ref{claim:#1}}}

\newcommand{\corollaryref}[1]{\hyperref[cor:#1]{Corollary~\ref{cor:#1}}}
\newcommand{\conjecturelabel}[1]{\label{con:#1}}
\newcommand{\conref}[1]{\hyperref[con:#1]{Conjecture~\ref{con:#1}}}

\newcommand{\definitionref}[1]{\hyperref[def:#1]{Definition~\ref{def:#1}}}

\newcommand{\equationref}[1]{\hyperref[eq:#1]{Equation~\ref{eq:#1}}}
\renewcommand{\eqref}[1]{\hyperref[eq:#1]{(\ref{eq:#1}})}

\newcommand{\factref}[1]{\hyperref[fact:#1]{Fact~\ref{fact:#1}}}

\newcommand{\figureref}[1]{\hyperref[fig:#1]{Figure~\ref{fig:#1}}}

\newcommand{\itemref}[1]{\hyperref[item:#1]{Item~(\ref{item:#1})}}
\newcommand{\lemmalabel}[1]{\label{lem:#1}}
\newcommand{\lemmaref}[1]{\hyperref[lem:#1]{Lemma~\ref{lem:#1}}}

\newcommand{\propref}[1]{\hyperref[prop:#1]{Proposition~\ref{prop:#1}}}

\newcommand{\propositionref}[1]{\hyperref[prop:#1]{Proposition~\ref{prop:#1}}}
\newcommand{\questionlabel}[1]{\label{q:#1}}
\newcommand{\questionref}[1]{\hyperref[q:#1]{Question~\ref{q:#1}}}

\newcommand{\remarkref}[1]{\hyperref[rem:#1]{Remark~\ref{rem:#1}}}

\newcommand{\sectionref}[1]{\hyperref[sec:#1]{Section~\ref{sec:#1}}}
\newcommand{\theoremlabel}[1]{\label{thm:#1}}
\newcommand{\theoremref}[1]{\hyperref[thm:#1]{Theorem~\ref{thm:#1}}}

%
%

\usepackage[T1]{fontenc}
\usepackage{kpfonts}
\usepackage{microtype}






\renewcommand{\leq}{\leqslant}
\renewcommand{\le}{\leqslant}
\renewcommand{\geq}{\geqslant}
\renewcommand{\ge}{\geqslant}




\newcommand{\norm}[1]{\lVert#1\rVert}




\newcommand{\R}{\mathbb{R}}


\usepackage{bm}

\newcommand{\iprod}[1]{\langle #1\rangle}










\renewcommand{\epsilon}{\varepsilon}


\newcommand{\remove}[1]{}

%
%


%
%

\newcommand{\trans}{\top}
\newcommand{\completion}{\textsc{Completion}\xspace}
\newcommand{\psdcompletion}{\textsc{PSD-Completion}\xspace}
\newcommand{\partition}{\textsc{Partition}\xspace}
\newcommand{\coloring}{\textsc{Coloring}\xspace}
\newcommand{\eat}[1]{}
\newcommand{\exactonesat}{\textsc{Exact-one-in-k-SAT}\xspace}
\newcommand{\indepset}{\textsc{Indepedent-Set}\xspace}


\title{Computational Limits for Matrix Completion
}

\author{Moritz Hardt\thanks{IBM Research Almaden. Email: {\tt mhardt@us.ibm.com}.}
\and Raghu Meka\thanks{Microsoft Research. Email: {\tt meka@microsoft.com}.}
\and Prasad Raghavendra\thanks{University of California, Berkeley. Email: {\tt
prasad@cs.berkeley.edu}. Supported by NSF Career Award and the Sloan
Fellowship. } 
\and Benjamin Weitz\thanks{University of California,
Berkeley. Email: {\tt bsweitz@eecs.berkeley.edu}. Supported by the NSF GRFP.}}

\begin{document}
\maketitle

\maketitle

\begin{abstract}
Matrix Completion is the problem of recovering an unknown real-valued low-rank
matrix from a subsample of its entries. Important recent results show that the
problem can be solved efficiently under the assumption that the unknown matrix
is incoherent and the subsample is drawn uniformly at random. Are these
assumptions necessary? 

It is well known that Matrix Completion in its full generality is NP-hard. 
However, little is known if make additional assumptions such as
incoherence and permit the algorithm to output a matrix of slightly
higher rank. In this paper we prove that Matrix Completion remains computationally intractable even
if the unknown matrix has rank~$4$ but we are allowed to output any
constant rank matrix, and even if additionally we assume that the unknown 
matrix is incoherent and are shown $90\%$ of the entries. This result relies on the conjectured hardness
of the $4$-Coloring problem. We also consider the positive semidefinite
Matrix Completion problem. Here we show a similar hardness result under the
standard assumption that $\mathrm{P}\ne \mathrm{NP}.$

Our results greatly narrow the gap between existing feasibility results and
computational lower bounds. In particular, we believe that our results give the
first complexity-theoretic justification for why distributional assumptions are
needed beyond the incoherence assumption in order to obtain positive results.
On the technical side, we contribute several new ideas on how to encode hard
combinatorial problems in low-rank optimization problems. We hope that
these techniques will be helpful in further understanding the computational
limits of Matrix Completion and related problems.
\end{abstract}

\vfill
\thispagestyle{empty}
\pagebreak

\section{Introduction}
Suppose we observe a subset of the entries of an unknown low-rank
matrix~$M$, can we recover the matrix~$M$ knowing this subset alone? This
problem, called Matrix Completion, is of fundamental interest in a number of
fields including statistics, machine learning, signal processing and
theoretical computer science. It is widely applicable to the design of
recommender systems as popularized by the famous Netflix Prize.  We are
interested in understanding the compuational complexity of Matrix Completion.

Much of the theory of Matrix Completion revolves around a beautiful line of
positive results. These results show that under certain assumptions there is a
natural semidefinite relaxation that solves the problem efficiently even if the number
of visible entries is asymptotically much smaller than the total number of
entries~\cite{CandesR09,CandesT10,Recht11}.
Specifically, these feasibility assumptions state that:
\begin{description}
\item[Low rank.] $M$ has rank $k$ where $k$ is typically constant or very slowly growing.
\item[Incoherence.] The row and columns spaces of $M$ 
are \emph{incoherent}. Informally, a subspace $U$ of $\R^n$ is incoherent 
if for every standard basis vector $e_i\in\R^n,$
the Euclidean norm of the projected vector~$P_Ue_i$ is much smaller than~$1.$ 
Here, $P_U$ denotes the orthogonal projection onto the space $U.$ 
\item[Randomness.] Finally, the subset of entries is drawn uniformly at random from~$M$
with a certain sufficient sampling density~$p$.
\end{description}
Among these assumptions the last one is particularly taxing. In most
applications, the algorithm designer cannot choose the subset of revealed
entries. Instead nature determines the subsample, e.g., available user/movie
ratings on Netflix. Often it is argued that without the randomness assumption,
the solution to the problem may no longer be uniquely determined. But rather
than insisting on uniqueness of the solution, it is natural to only require
consistency with the given subset. That is, we only require the solution to
agree with the observed entries. There could be multiple valid solutions.
Moreover, algorithmically two additional relaxations are natural. First, we can attempt
to make the problem easier by allowing some slack in terms of the rank $r>k$
of the solution. Second, we can allow an approximation error on the
observed entries. That is, rather than matching the observable entries exactly
we allow the algorithm to find a solution that is close in Frobenius norm.

Surprisingly, even with these relaxations the status of some 
deceptively simple algorithmic questions remained wide open. For example:
\begin{question}\questionlabel{relaxed}
Given entries of an incoherent rank $4$ matrix, 
can we find a rank $100$ matrix that is approximately consistent with
this set of entries in polynomial time? 
\end{question}
Even though the problem might appear to be very simple, neither upper bounds nor lower
bounds are known. Matrix Completion in its full generality is of course
NP-hard, but no hardness result is known for the problem we just described.
In fact, for small~$k$, the main prior hardness result we are aware of
is due to Peeters~\cite{Peeters96} who showed that given a subset of a
rank~$3$ matrix it is NP-hard to find an exactly consistent matrix of
\emph{equal} rank.\footnote{Several results are known over finite fields, but
Matrix Completion over the reals is of particular interest in applications.} 
However, Peeters' hardness result does not apply to the various relaxations of interest.

The lack of applicable hardness results for Matrix Completion is partially due
to the nature of the problem. Low-rank decompositions over the reals do not seem to 
exhibit the same combinatorial rigidity common to most NP-hard optimization problems. 
This conundrum arises in a number of interesting machine
learning problems such as Sparse PCA and Robust PCA. Indeed, only recently did
Berthet and Rigollet give evidence for computational hardness of the Sparse PCA problem by reducing to the Planted Clique problem in a natural setting~\cite{BerthetR13}. 
For Robust PCA, the hardness result of Hardt and Moitra~\cite{HardtM13} appeals to the conjectured hardness of Small Set
Expansion. 

Our goal is to make progress on understanding the computational complexity of
Matrix Completion in the natural relaxed setting that we described above. 
We show that under a plausible hardness assumption, there is in fact no polynomial time
algorithm that solves the task. An immediate corollary is that even
if we adopt the first two feasibility assumptions, some distributional
assumption on the revealed entries is necessary in order to make Matrix
Completion tractable. 

We also consider a natural variant of Matrix Completion where the unknown
matrix is positive semidefinite and so must be the output matrix. The positive
semidefinite completion problem arises naturally in the context of Support
Vector Machines (SVM). The kernel matrix used in SVM learning must be positive
semidefinite as it is the Gram matrix of feature vectors. But oftentimes the
kernel matrix is derived from partial similarity information resulting in
incomplete kernel matrices. In fact, this is a typical situation in medical
and biological application domains~\cite{TsudaAA03}. In such cases the data
analyst would like to complete the partial kernel matrix to a full kernel
matrix while ensuring positive semidefiniteness. Moreover, since it is often
infeasible to store a dense $n\times n$ matrix, it is desirable to also have a
low-rank representation of the kernel matrix~\cite{FineS02}. 
This is precisely the low-rank
positive semidefinite completion problem. Our results establish strong
hardness results for this problem under natural relexations. In this case we
show that for any $k\ge2$ it is NP-hard to complete a partially given rank~$k$
matrix by a rank~$(2k-1)$ matrix.

\subsection{Our Results}

We will restrict our attention to symmetric $n\times n$ matrices throughout.
As we are proving hardness results, this only makes the results stronger. We
begin with a formal definition of the Matrix Completion problem. Here, we
restrict our attention to the case where both input and output have bounded
coefficients as is the case in most application settings.

\begin{definition}
We define the \hilight{$(k,r,p,\epsilon,c)$-\completion} problem as follows:
\begin{description}
\item[Input:] A matrix $A\in(\R\cup\{\bot\})^{n\times n}$ and a set
$\Omega\subseteq[n]\times[n]$ of size $|\Omega| \geq pn^2$ such that there exists a rank $k$ matrix $M$ with bounded entries \hilight{$|M(i,j)| \leq c$ for all $i$ and $j$}, such that for all $(i,j)\in\Omega$ we have $A(i,j)=M(i,j)$ and for all $(i,j)\not\in\Omega$ we have $A(i,j)=\bot.$
\item[Output:] A rank $r$ matrix $B$ with bounded coefficients \hilight{$|B(i,j)| \leq c$ for all $i$ and $j$}, such that $B$
approximates $A$ with small root-mean-squared error (RMSE): 
$\sum_{(i,j) \in \Omega} |A(i,j) - B(i,j)|^2 \leq \epsilon n .$            
\end{description}
We will use $(k,r,p,c)$-\completion as a shorthand for $(k,r,p,0,c)$-\completion, i.e. exact completion. We also use $(k,r,c)$-\completion as a shorthand for $(k,r,0,0,c)$-\completion.
\end{definition}

To state our first result we introduce the problem of coloring a $k$-colorable
graph with $r$ colors.

\begin{definition}
We define the $(k,r)$-\coloring problem as follows:
\begin{description}
\item[Input:] A $k$-colorable graph $G$.
\item[Output:] An $r$-coloring of the graph $G$.
\end{description}
\end{definition}

Our second theorem will appeal to a closely related variant of the problem in
which the output is an independent set of size $n/r$ rather than an
$r$-coloring.
\begin{definition}
We define the $(k,r)$-\indepset problem as follows:
\begin{description}
\item[Input:] A $k$ colorable graph $G$.
\item[Output:] An independent set of size $n/r$ in the graph $G$.
\end{description}
\end{definition}
Notice that if there exists an $r$-coloring of the graph then one of the color
classes will be an independent set of size $n/r$. Thus, $(k,r)$-\indepset
reduces to $(k,r)$-\coloring. Despite extensive work on algorithms for
$k$-\coloring \cite{Wigderson82,BergerR90,BlumK97,KargerMS98,AroraCC06}, the
problem has remained notoriously hard.  Given a $3$-colorable graph, the best
algorithms \cite{Chlamtac07,KT14} known can only find an independent set of size at
most $n^{1-\Omega(1)}$.  In particular, $(k,r)$-\coloring and
$(k,r)$-\indepset with $k = 4$ and $r = O(1)$ remains hopelessly out of reach
of existing algorithmic techniques. From a complexity standpoint it is
believed that the $(k,r)$-\indepset problem (and hence $(k,r)$-\coloring)
cannot be solved in polynomial time for even $k=4$ and $r = O(1)$. This is
further supported by the work of \cite{DinurS10} who show this to be the case
under a variant of the Unique Games Conjecture (called $2$-to-$1$ Label
Cover) which by now underlies a number of hardness results in complexity
theory.

We will show that assuming $(k,r)$-\coloring and $(k,r)$-\indepset are hard
for $k=4,r=O(1)$, the Matrix Completion problem is hard in a range of natural
parameters even on incoherent matrices and even if most entries are revealed.
To make the theorems precise we state the assumption concretely and give a
formal definition of incoherence.

\begin{conjecture}\conjecturelabel{coloring}
The $(k,r)$-\coloring problem is not in P for any $r \geq k\ge 3$ and $r = O(1)$.  
\end{conjecture}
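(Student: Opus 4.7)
The statement is presented as a conjecture rather than a theorem, and for a reason: since $(3,3)$-\coloring is ordinary graph $3$-coloring, any unconditional proof of the $r=k=3$ case would already imply $P\ne NP$. Thus an honest proof proposal must be conditional. My plan is therefore not to establish the conjecture unconditionally, but to reduce it to the strongest known complexity assumptions and explain what the gap to a full proof is.

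I would split the argument into two regimes. For $r=k$, the conjecture follows directly from $P\ne NP$ via Karp's classical NP-hardness of $3$-coloring, padded up to arbitrary constant $k$ by forcing a $(k-3)$-clique joined to the whole graph. The substantive regime is $r>k$, where the task is to color a $k$-colorable graph with strictly more colors than its chromatic number. Here I would invoke the reduction of Dinur and Safra \cite{DinurS10}: starting from a $2$-to-$1$ Label Cover instance, one builds a long-code–style graph on $\{0,1\}^L$ per vertex, and uses an analytic dictatorship test whose completeness is witnessed by a fixed $k$-coloring of a noise graph and whose soundness rules out any $r$-coloring unless the Label Cover has a good assignment. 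Combined with $P\ne NP$ and the $2$-to-$1$ Games Conjecture, this yields hardness of $(k,r)$-\coloring for every $r\ge k\ge 4$ with $r=O(1)$. The $k=3$, $r>3$ case is covered by earlier results (Khot, Dinur--Mossel--Regev) in the same framework.

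The main obstacle is exactly the one dictated by the complexity landscape: there is no known path from $P\ne NP$ alone to constant-factor hardness of coloring $k$-colorable graphs for $r>k$, and closing this gap is a central open problem in hardness of approximation. A secondary obstacle is quantitative: the Dinur--Safra pipeline goes through long-code blowups of size $2^{\mathrm{poly}(1/\delta)}$, which is harmless for $r=O(1)$ but would have to be controlled carefully to interface with the bounded-coefficient, incoherent Matrix Completion reduction later in the paper. Given these obstacles, the most one can realistically prove is a reduction of \conref{coloring} to the $2$-to-$1$ Games Conjecture plus $P\ne NP$; the stronger unconditional statement as written is best read as a working hypothesis supported by (i) the longstanding failure of algorithms \cite{Wigderson82,BergerR90,BlumK97,KargerMS98,AroraCC06,Chlamtac07,KT14} to color $k$-colorable graphs with $O(1)$ colors for any fixed $k\ge 3$, and (ii) the conditional hardness result of \cite{DinurS10}.
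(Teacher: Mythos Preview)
Your reading is correct: the paper does not prove this statement at all---it is explicitly labeled a conjecture and used as a hypothesis for \theoremref{maincomplete}. The paper's own justification is exactly the two pieces of evidence you identify, namely the long record of algorithmic failure \cite{Wigderson82,BergerR90,BlumK97,KargerMS98,AroraCC06,Chlamtac07,KT14} and the conditional hardness under the $2$-to-$1$ Label Cover conjecture \cite{DinurS10}, so your proposal is aligned with (and somewhat more detailed than) the paper's treatment.

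One small correction: your sentence ``there is no known path from $P\ne NP$ alone to constant-factor hardness of coloring $k$-colorable graphs for $r>k$'' is too strong as stated, since some unconditional NP-hardness results do exist for small gaps (e.g., $(3,4)$-\coloring is NP-hard by Khanna--Linial--Safra and Guruswami--Khanna); the open regime is \emph{arbitrary} constant $r$, which is what the conjecture actually asserts. Also, your remark about long-code blowup sizes is a concern for the downstream Matrix Completion reduction, not for the conjecture itself, and can be dropped here.
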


\begin{conjecture}\conjecturelabel{indepset}
The $(k,r)$-\indepset problem is not in P for any $r \geq k \ge 3$ and $r = O(1)$.  
\end{conjecture}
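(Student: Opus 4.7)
Since this statement is framed as a conjecture rather than a claim with a known proof, what I can realistically propose is a plan for establishing it conditionally, modeled on the Dinur--Safra reduction cited in the paragraph just above. The target is a gap-preserving reduction from a $2$-to-$1$ Label Cover instance with perfect completeness and soundness $s=o(1)$ to the $(k,r)$-\indepset problem at $k=4$ and arbitrary constant $r$.

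The plan is to execute the standard long-code composition in four steps. First, for each Label Cover variable I would introduce a block of $2^{|\Sigma|}$ vertices indexed by folded subsets of the label alphabet $\Sigma$, and for each Label Cover edge I would place intra-block and cross-block edges so that in the Yes case the ``dictator'' subsets $\{S : \ell(v)\in S\}$ form an independent set of density exactly $1/k$ (with $k=4$). Second, I would prove a dictator-versus-quasirandom theorem: any independent set in the resulting graph of density above $1/r$ must either contain a coordinate with large low-degree influence in some block, or be noise-stable enough to invoke an invariance principle. Third, I would use influence decoding together with smoothness of the $2$-to-$1$ Label Cover to produce a randomized labeling satisfying strictly more than an $s$ fraction of constraints, contradicting soundness. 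Fourth, I would track parameters so that the graph remains polynomial in size and genuinely $4$-colorable in the Yes case; note that because $(k,r)$-\indepset reduces to $(k,r)$-\coloring (as the paper observes just above), this reduction would simultaneously support Conjecture~\ref{con:coloring}.

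The main obstacle, and precisely why this remains open, is the second step. Known long-code independent-set tests for $k$-colorable graphs only rule out independent sets of density $2^{-\Omega(k)}$, not density $1/r$ for an arbitrary constant $r$ with $k$ fixed at $4$. I would first attempt to import the Dinur--Mossel--Regev framework using noisy-intersection graphs combined with a smooth $2$-to-$1$ Label Cover, hoping that the rigid structure of $2$-to-$1$ projections at small alphabet helps push the gap down. If that yields only $r \le 2^{o(k)}$, a fallback is to exploit the very small regime $k=4$ directly: here the Fourier analysis of the folded long code is tractable enough to attempt an explicit dictator test whose soundness is governed by a constant depending on $r$ rather than on $2^{-\Omega(k)}$. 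This explicit low-$k$ route is where I would place the bulk of the effort, since any progress on this step would both establish the conjecture and feed directly into the hardness reductions for matrix completion that form the main results of this paper.
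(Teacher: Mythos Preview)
You have correctly identified that this statement is a conjecture, not a theorem, and accordingly the paper contains no proof of it. \conref{indepset} is introduced purely as a hardness \emph{assumption} underlying \theoremref{maincompleteapprox}; the paper's only justification is the sentence pointing to \cite{DinurS10}, which establishes the conclusion conditionally on a $2$-to-$1$ Label Cover conjecture. There is therefore nothing in the paper to compare your proposal against.

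Your plan is a reasonable sketch of how one would try to recover the Dinur et al.\ conditional result, and you are candid that the crucial second step (pushing the independent-set soundness from $2^{-\Omega(k)}$ down to $1/r$ for fixed $k=4$ and arbitrary constant $r$) is exactly the open barrier. That is accurate: this is precisely why the statement is posed as a conjecture rather than proved. Just be aware that even a successful execution of your outline would yield only a \emph{conditional} statement (under $2$-to-$1$ Label Cover or a similar assumption), which is weaker than the unconditional ``not in P'' phrasing of \conref{indepset}; the paper is content to assume the conjecture outright rather than reduce it to another unproven hypothesis.
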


The coherence of a matrix is defined as follows.

\begin{definition}
A symmetric $n\times n$ matrix $M$ of rank~$k$ has coherence $\mu$ if there
exists a singular value decomposition $M=U\Sigma V^\trans$ such that for every
standard basis vectors $e_i\in\R^n$ we have that $\|e_i^\trans U\|_2 \le
\sqrt{k\mu/n}$ and $\|e_i^\trans V\|_2\le \sqrt{k\mu/n}.$
\end{definition}

Note that \conref{coloring} is weaker than \conref{indepset}. With the above definitions we have the following results.

\Bnote{Changed the statements of Theorems 1.8 and 1.9 to accurate settings for $\epsilon$ and $k$}

\Bnote{I think we had $k \geq 4$ as an artifact of referencing the Dinur coloring hardness proof}

\begin{theorem}\theoremlabel{maincomplete}
Assume \conref{coloring}. Then, for any constants $c \geq 1$, \hilight{$k\ge 3$} and $r>k$, there is no polynomial time algorithm that solves the \hilight{$(k,r,0.9,c)$-\completion} problem on matrices of
coherence $\mu\le O(1)$. Further, for all \hilight{$1/2 > \epsilon > 0$}, the same conclusion
holds even if we are only required to compute a rank $r$ matrix which
approximates each entry with additive error at most $\epsilon$. 
\end{theorem}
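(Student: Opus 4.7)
The plan is to reduce the $(k, r)$-\coloring problem, which is hard under \conref{coloring}, to the $(k, r, 0.9, c)$-\completion problem. Given a $k$-colorable graph $G = (V, E)$ on $n$ vertices, I construct a partially specified symmetric $N \times N$ matrix $A$ with $N = \Theta(n)$ and a revealed set $\Omega$ of size at least $0.9 N^2$ that satisfies two properties: (i) if $G$ is $k$-colorable, then $A$ admits a rank-$k$, $O(1)$-coherent, bounded-entry completion; and (ii) from any rank-$r$ completion $B$ that matches the revealed entries within additive error $\epsilon < 1/2$ per entry, one efficiently extracts a valid $r$-coloring of $G$.

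For (i), I use the canonical rank-$k$ encoding of a $k$-coloring $f\colon V \to [k]$: let $V_t = f^{-1}(t)$ be the $t$-th color class and set $M = \sum_{t=1}^{k} \mathbf{1}_{V_t} \mathbf{1}_{V_t}^\trans$. Then $M_{vv}=1$, $M_{vw}=0$ for every edge $(v,w)\in E$, and $M_{vw}\in\{0,1\}$ otherwise. Revealing the diagonal and the edge entries is consistent with this $M$. By adding a controlled number of dummy vertices so that all color classes have balanced size $\Theta(n)$, $M$ has coherence $O(1)$ and entries in $[0,1] \subseteq [-c,c]$.

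The crux is (ii). The naive reveal pattern above only forces any rank-$r$ completion $B$ to yield an orthogonal representation of $G$ in $\R^r$, a quantity governed by the vector chromatic number, which may be much smaller than $\chi(G)$. To bridge this gap I augment $A$ with a \emph{color-rigidity gadget}: introduce $r$ auxiliary ``color'' indices $z_1,\ldots,z_r$ together with several redundant ``copies'' of each vertex. The revealed entries on the color block (for example, an identity) force the corresponding vectors $u_{z_1},\ldots,u_{z_r}$ in any rank-$r$ factorization of $B$ to form a near-orthonormal basis of $\R^r$, while the copy entries pin each vertex vector $u_v$ to within $\epsilon$ of exactly one of the $u_{z_i}$. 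The assignment $v \mapsto \arg\max_i \iprod{u_v, u_{z_i}}$ then defines an $r$-coloring of $V$, and the edge constraint $B_{vw} \approx 0$ rules out monochromatic edges.

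The main technical obstacle is engineering the color-rigidity gadget so that it simultaneously (a) still admits a rank-$k$ completion arising from any valid $k$-coloring of $G$, (b) preserves $O(1)$ coherence and bounded entries after padding, and (c) brings the revealed fraction up to at least $0.9$. For (c) I plan to append a fully revealed rank-$1$ block (for example, a large all-ones submatrix between dummy and real indices) that dilutes the hidden entries below the $10\%$ threshold without disturbing the rigidity argument. Finally, the additive-$\epsilon$ version reduces to the exact one by rounding: since every revealed entry lies in $\{0,1\}$ and $\epsilon < 1/2$, rounding each entry of $B$ recovers the combinatorial pattern exactly, so any additive-approximation algorithm automatically solves the exact-completion instance.
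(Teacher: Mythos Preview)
Your proposal has a genuine structural gap in the soundness direction. The color-rigidity gadget you describe---revealing an $r\times r$ (near-)identity block on auxiliary indices $z_1,\ldots,z_r$ so that in any rank-$r$ factorization the vectors $u_{z_1},\ldots,u_{z_r}$ become a near-orthonormal basis of $\R^r$---forces the revealed submatrix itself to have rank $r$. Hence \emph{every} completion of $A$ has rank at least $r$, and in particular there is no rank-$k$ completion for $k<r$. This destroys requirement (a) in your own list: the instance is no longer a legal input to $(k,r,0.9,c)$-\completion, which by definition must be consistent with some rank-$k$ matrix. You flag this as ``the main technical obstacle'' but do not resolve it, and it does not appear to be resolvable along these lines: any gadget strong enough to pin $r$ linearly independent directions will have rank $r$ on its revealed part.

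The paper avoids this tension entirely by \emph{not} adding any rigidity gadget. Completeness is exactly your construction. For soundness it uses the bounded-entry hypothesis together with \lemmaref{Linial} to obtain a factorization $M=XY^\trans$ with all row vectors of norm at most $(cr)^{1/4}$, and then covers these vectors by a $\delta$-net $W$ of the cube of side $2(cr)^{1/4}$ in $\R^r$. Coloring vertex $i$ by the pair of net points closest to $(u_i,v_i)$ yields a valid coloring of $G$ with $|W|^2=(4\sqrt{cr}/(1-2\epsilon))^{2r}$ colors. This is not an $r$-coloring but an $O_{c,r,\epsilon}(1)$-coloring, which suffices because \conref{coloring} asserts hardness for every constant number of colors. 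Two smaller points: the paper pads with \emph{zeros} (a separate rank-$0$ block) rather than an all-ones block, since an all-ones cross block is not consistent with the coloring completion $M_f$; and your final rounding argument does not reduce approximate to exact completion, because rounding the entries of a rank-$r$ matrix does not preserve rank---the paper instead runs the net argument directly on the approximate completion.
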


In most practical scenarios it suffices to look for a low-rank completion with small RMSE error. 
Our next result addresses this situation.

\begin{theorem}\theoremlabel{maincompleteapprox}
Assume \conref{indepset}. Then, for any constants $k\ge 4$, $r>k$, and \hilight{$0 < \epsilon < 1/(2cr)^2$}, there is no polynomial
time algorithm that solves the \hilight{$(k,r,0.9,\epsilon,c)$-\completion} problem on matrices of
coherence $\mu\le O(1).$ 
\end{theorem}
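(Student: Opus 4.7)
I will reduce from the $(k,r'')$-\indepset problem for a constant $r'' = r''(r)$ to be determined, producing from a $k$-colorable graph $G=(V,E)$ on $n_0$ vertices an instance of $(k,r,0.9,\epsilon,c)$-\completion\ whose approximate rank-$r$ completions reveal an independent set of size $n_0/r''$ in $G$. Since \conref{indepset} asserts hardness for every constant $r''\ge k$, and the rounding I have in mind loses only a constant factor (depending on $r$) in the independent-set size, taking $r''$ sufficiently large refutes the conjecture.

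\textbf{The instance.} Fix any proper $k$-coloring $V=V_1\sqcup\cdots\sqcup V_k$ of $G$ (hidden from the algorithm) and add $k$ disjoint groups $W_1,\ldots,W_k$ of $N=\Theta(n_0)$ anchor vertices each, chosen so that $(kN)^2/(n_0+kN)^2\ge 0.9$; set $n=n_0+kN=\Theta(n_0)$. Define the hidden matrix
\[
M = \sum_{\ell=1}^{k}\mathbf{1}_{V_\ell\cup W_\ell}\mathbf{1}_{V_\ell\cup W_\ell}^{\trans}\in\{0,1\}^{n\times n}\mper
\]
Then $M$ is PSD of rank at most $k$, every entry lies in $[0,1]\subseteq[-c,c]$, and because each extended class has size at least $N=\Omega(n)$, its coherence is $O(1)$. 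Reveal $\Omega=\{(v,v):v\in[n]\}\cup E\cup(W\times W)$: the diagonal, the edges of $G$ (with value $0$), and the entire anchor block (whose entries are fixed by the anchor coloring). The $W\times W$ block alone has $|W|^{2}\ge 0.9\,n^{2}$, so $|\Omega|\ge 0.9\,n^{2}$; crucially I do \emph{not} reveal any $V\times W$ entry, since those would leak the hidden coloring.

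\textbf{Rounding.} Given a rank-$r$ completion $B$ with $\sum_{(i,j)\in\Omega}(B(i,j)-M(i,j))^{2}\le\epsilon n$, write $B=\sum_{i=1}^{r}\sigma_i v_i v_i^{\trans}$. Because $B|_{W\times W}$ is within Frobenius distance $\sqrt{\epsilon n}$ of the PSD rank-$k$ matrix $M|_{W\times W}$, whose nonzero eigenvalues all equal $N=\Omega(n)$, Weyl's inequality and Cauchy interlacing pin down the top $k$ eigenvalues of $B$ to be positive and of magnitude $\Theta(N)$, and bound its negative-eigenvalue mass. Set $\phi(u)=\bigl(\sqrt{|\sigma_i|}\,v_i(u)\bigr)_{i=1}^{r}\in\R^{r}$; after discarding the small negative-eigenspace contribution, $B(u,v)\approx\langle\phi(u),\phi(v)\rangle$, $\|\phi(v)\|^{2}\approx B(v,v)\approx 1$ from the diagonal constraint, and $\langle\phi(u),\phi(v)\rangle\approx 0$ for every edge $(u,v)\in E$. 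A Markov argument discards the $O(\epsilon n)$ vertices with anomalous diagonal and the $O(\epsilon r^{2}n)$ edges on which $|B(u,v)|\ge 1/(2r)$; removing the endpoints of bad edges and the diagonal-bad vertices eliminates only $O(\epsilon r^{2}n)\le n_0/2$ original vertices, thanks to $\epsilon<1/(2cr)^{2}$.

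\textbf{Pigeonhole and main obstacle.} Cover $S^{r-1}$ by $C_r=O((1+6/\pi)^{r})$ spherical caps of angular radius $\pi/3$; since $r$ is a constant so is $C_r$. Normalise the surviving $\phi(v)$'s and bucket each by which cap it lies in; pigeonhole yields a cap containing at least $n_0/(2C_r)$ of them. Any two unit vectors in a common cap of angular radius $\pi/3$ have inner product at least $1/2$, which exceeds the $1/(2r)$ edge bound, so no two of the corresponding vertices span an edge of $G$. This produces an independent set in $G$ of size $n_0/(2C_r)$; setting $r''=2C_r$ refutes \conref{indepset} for the (constant) value $r''$. The main obstacle I anticipate is justifying the ``truncate the negative eigenspace'' step cleanly: because the approximation bound is only over $\Omega$ and not over all of $[n]\times[n]$, replacing $B$ by its PSD projection is not free, and I expect to need the anchor block $W\times W$ as a global spectral witness to argue that the negative-eigenvalue part of $B$ has small operator norm globally, not merely small Frobenius norm on the sampled set.
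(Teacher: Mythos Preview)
Your proposal has a genuine gap, and it is precisely the one you flag as the ``main obstacle.'' The anchor block does \emph{not} bound the negative-eigenvalue mass of $B$. Cauchy interlacing from the principal submatrix $B|_{W\times W}$ to $B$ gives $\lambda_i(B)\ge\lambda_i(B|_{W\times W})$, which certifies that the top $k$ eigenvalues of $B$ are $\Theta(N)$; but for the bottom eigenvalue it only gives the \emph{upper} bound $\lambda_n(B)\le\lambda_{|W|}(B|_{W\times W})$, which says nothing about how negative $\lambda_n(B)$ can be. Concretely, take $B=M+\sigma ww^{\trans}$ with $\sigma<0$ and $w$ a unit vector supported on $V$ and orthogonal to the $\mathbf{1}_{V_\ell\cup W_\ell}$'s (e.g.\ $w\propto \mathbf{1}_{S}-\mathbf{1}_{S'}$ for equal-size $S,S'$ in one color class). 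Then $B|_{W\times W}=M|_{W\times W}$ exactly, yet by spreading $w$ over $\Theta(n_0)$ vertices you can make $|\sigma|=\Theta(n)$ while keeping all entries bounded by $c$ and the $\Omega$-error zero on $W\times W$ and tiny elsewhere. So $\langle\phi(u),\phi(v)\rangle$ and $B(u,v)$ can differ by a constant on a constant fraction of pairs, and your inference ``same cap $\Rightarrow$ large $B(u,v)$ $\Rightarrow$ non-edge'' breaks.

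The paper sidesteps this entirely by \emph{not} using the eigendecomposition. Its key device is \lemmaref{Linial}: any rank-$r$ matrix with entries bounded by $c$ admits a (generally asymmetric) factorization $B=XY^{\trans}$ with every row of $X$ and $Y$ of norm at most $(cr)^{1/4}$; this comes from John's theorem (or an SDP) and requires no PSD structure. One then carries \emph{two} vectors $u_i,v_i$ per vertex with $\langle u_i,v_j\rangle=B(i,j)$ exactly, and the rounding is a random-cone argument: pick a random unit $x$ and keep $i$ if both $\tilde u_i\cdot x$ and $\tilde v_i\cdot x$ exceed $\cos(\phi/2)$. This two-vector viewpoint is what makes the edge constraint $\langle u_i,v_j\rangle\approx 0$ hold on the nose, with no negative-eigenspace truncation to justify. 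The $0.9$-density is achieved by zero padding rather than anchors; your anchor block is more elaborate but, since it cannot serve as the spectral witness you need, it buys nothing.

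If you want to salvage your single-vector spherical-cap rounding, you would need an analogue of \lemmaref{Linial} that produces a \emph{symmetric} short-vector factorization; but for indefinite $B$ no such Gram representation exists, which is exactly why the paper works with the asymmetric factorization instead.
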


This result should be contrasted with positive results showing that
$(k,k,O(k\mu(\log^2 n))/n)$-\completion is easy so long as the
entries are revealed randomly~\cite{Recht11}.


\paragraph{Positive Semidefinite Completions.}   
We define the \hilight{$(k,r,p)$-\psdcompletion} problem the same way we
defined \hilight{$(k,r,p)$-\completion} except that we drop the bound on the
coefficients and additionally require that both $M$ and $B$ must be positive
semidefinite. Our result here is incomparable to the previous one and it
relies on the standard NP-hardness assumption.

\begin{theorem}\theoremlabel{psdcompletemain}
Assume that $\mathrm{P}\ne\mathrm{NP}.$ Then for every even $k\ge 2$ there is no
polynomial time algorithm that solves the $(k,2k-1,0.9)$-\psdcompletion problem.
\end{theorem}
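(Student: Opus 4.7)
The plan is to reduce an NP-hard combinatorial problem, most likely a variant of graph coloring or a constraint satisfaction problem, to $(k,2k-1,0.9)$-\psdcompletion for each even $k\ge 2$. The starting point is the standard dictionary: a rank-$r$ PSD $n\times n$ matrix $M$ factorizes as $M = VV^\trans$ with $V\in\R^{n\times r}$, so a rank-$r$ PSD completion of a partial matrix is equivalent to finding vectors $v_1,\ldots,v_n \in \R^r$ that realize the prescribed inner products $\langle v_i,v_j\rangle = A_{ij}$ on the revealed set $\Omega$. The task therefore becomes engineering a partial Gram matrix whose minimum completion rank is exactly $k$ on YES instances and at least $2k-1$ on NO instances.

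The construction I would try introduces a \emph{frame gadget} of $k$ extra anchor vertices whose specified entries force the corresponding vectors $c_1,\ldots,c_k$ to be mutually orthonormal (by imposing $\norm{c_i}^2 = 1$ and $\langle c_i,c_j\rangle = 0$ for $i\ne j$), thereby rigidly locking a copy of $\R^k$ into any PSD completion. For each variable or vertex of the source instance a fresh unit vector is introduced whose inner products with the anchors and with the other variable vectors encode the combinatorial constraints. In a YES instance, the source problem is solved classically and each variable vector placed at one of $k$ designated positions inside the frame, producing an exact rank-$k$ PSD completion. Padding the remaining unrevealed positions with trivially satisfiable entries brings the density of revealed entries above $90\%$ without affecting the rank.

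The heart of the proof is the NO direction: a dimension-counting/rigidity argument showing that every PSD completion of a NO instance has rank at least $2k$. Once the frame gadget has locked in its $k$-dimensional subspace $U$, a rank-$(2k-1)$ completion leaves at most $k-1$ further orthogonal directions outside $U$. The inner-product constraints are engineered so that any variable vector whose projection onto $U$ is not a legal \emph{color} is forced to pick up a new orthogonal defect direction, and the combinatorial structure of a NO instance forces at least $k$ mutually orthogonal defects, exceeding the $k-1$ slots that rank $2k-1$ affords. The evenness of $k$ likely enters because the construction groups the frame dimensions into $k/2$ conjugate pairs, giving the encoding a complex/even-dimensional flavor over $\R$ in which the natural rank parameter is $k/2$ over $\C$ and $k$ over $\R$.

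The main obstacle will be calibrating the inner products so that a single violation of the combinatorial constraint costs $k$ extra orthogonal dimensions rather than just one: this is what distinguishes the desired $k$-versus-$(2k-1)$ gap from the much weaker $k$-versus-$(k+1)$ gap that a naive vector-chromatic-number reduction would give. Carrying out this rigidity argument cleanly, while simultaneously verifying that the gadget plus the encoding plus the padding block leave at least $90\%$ of the entries of the partial matrix revealed, is where the technical heart of the proof lies.
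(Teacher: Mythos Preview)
Your proposal is a plausible outline but it is not the paper's argument, and the crucial step you flag as ``the technical heart'' is left as a wish rather than a construction. The paper does \emph{not} reduce from coloring or a CSP for this theorem; it reduces from \textsc{Partition}. Given weights $a_1,\ldots,a_n$ with $\sum_i a_i=1$, it introduces for each $i$ an orthonormal pair $u_{(i,1)},u_{(i,2)}$ together with their normalized sum $u_{(i,3)}=\tfrac{1}{\sqrt2}(u_{(i,1)}+u_{(i,2)})$, and cyclically constrains $\langle u_{(i,j)},u_{(i+1,j)}\rangle=\cos a_i$ for $j=1,2,3$. In $\R^2$ this forces the $(i{+}1)$st frame to be a $\pm a_i$ rotation of the $i$th, and the cycle closing forces $\sum_i \pm a_i=0$, i.e.\ a partition. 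The $(2,3)$ gap comes from a short, concrete rigidity lemma: if the vectors live in $\R^3$, writing the orthogonal map taking one frame to the next in the basis $\{u_{(i,1)},u_{(i,2)},u_{(i,1)}\times u_{(i,2)}\}$ and using the extra constraint on the \emph{sum} vector $u_{(i,3)}$ forces the third row and column to be trivial, so all frames in fact lie in a common plane. The general $(k,2k-1)$ case is obtained by placing $k/2$ copies of this $(2,3)$ instance in a block-diagonal partial matrix (this is where evenness enters), so any rank $\le 2k-1$ completion restricts to a rank $\le 3$ completion of some block; zero-padding gives the $0.9$ density.

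By contrast, your frame-plus-defect-directions sketch never explains \emph{why} distinct illegal variable vectors must occupy mutually orthogonal defect directions. Nothing in a generic Gram constraint prevents several ``bad'' vectors from sharing the same one-dimensional defect, which would collapse your rank lower bound from $2k$ to $k+1$. Engineering constraints that force $k$ orthogonal defects is exactly the hard part, and you have not supplied a gadget that does it. The paper sidesteps this difficulty entirely: its rigidity is local (a $3\times3$ orthogonal-matrix calculation per step) rather than a global dimension count, and the gap is then amplified by block-diagonal tensoring rather than built in one shot.
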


This theorem strengthens a recent result by E.-Nagy et al.~\cite{ENagyLV13}
who showed that $(k,k,0)$-\psdcompletion is NP-hard for every $k\ge2.$ 

We also prove a version of \theoremref{psdcompletemain} for approximate completion:
\begin{theorem}\theoremlabel{psdapproxcompletemain}
Assume that $\mathrm{P}\ne\mathrm{NP}$. Then for every even $k\ge 6$ and $\epsilon < O(k^{-5})$, there is no polynomial time algorithm that solves the $(k,2k-1,0.9)$-\psdcompletion problem 

\hilight{even if the output matrix only approximates each entry with additive error at most $\epsilon$.}
\end{theorem}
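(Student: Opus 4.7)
The plan is to promote the exact hardness reduction of Theorem \ref{thm:psdcompletemain} into a robust reduction that withstands per-entry additive error $\epsilon$. The exact proof reduces some NP-hard problem $\Pi$ to $(k,2k-1,0.9)$-\psdcompletion via a gadget construction that builds a partial matrix $A$ whose rank-$(2k-1)$ PSD completions encode solutions of $\Pi$. My strategy is to show that any PSD matrix $B$ of rank $2k-1$ whose entries match $A$ on the revealed set $\Omega$ up to additive error $\epsilon$ can still be decoded to a valid solution of $\Pi$, provided $\epsilon = O(k^{-5})$ is small enough to preserve the combinatorial structure of the gadget.

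First I would write $B = V^\trans V$ for a $(2k-1)\times n$ matrix $V$ with columns $v_1,\dots,v_n\in\R^{2k-1}$, so that the approximate-entry hypothesis becomes $|\langle v_i,v_j\rangle - M(i,j)|\le \epsilon$ for $(i,j)\in\Omega$. Second, I would revisit the gadget used for Theorem \ref{thm:psdcompletemain}: PSD rank gadgets typically constrain the $v_i$ to lie close to a finite set of canonical positions (vertices of a simplex or points on a sphere), with the combinatorial structure of $\Pi$ encoded by the pattern of allowed inner products between canonical positions. Third, I would establish a stability lemma: if the inner products $\langle v_i,v_j\rangle$ lie within $\epsilon$ of their prescribed values on $\Omega$, then each $v_i$ is within Euclidean distance $O(\poly(k)\sqrt{\epsilon})$ of its intended canonical position, so the intended position is uniquely identifiable whenever this distance is smaller than half the minimal gap between canonical positions. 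Applying this cluster-identification procedure vertex by vertex yields the decoded solution of $\Pi$.

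The main obstacle is tracking the constants sharply enough to obtain the bound $\epsilon < O(k^{-5})$. Three sources of loss compound: the canonical positions are separated by inner-product gaps of order $1/k$ because there are $\Theta(k)$ of them; converting an inner-product error into a positional error costs another factor through the identity $\|v_i-u\|_2^2 = \|v_i\|^2 + \|u\|^2 - 2\langle v_i,u\rangle$; and the decoding of a single label may depend on inner products with $\Theta(k)$ other vectors whose positions are themselves uncertain, so errors propagate across the gadget. Combining these gives a bound of the form $\epsilon \cdot \poly(k)\ll 1$, which we sharpen to $k^{-5}$ by exploiting the slack that rank $2k-1$ affords over rank $k$. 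The hypothesis $k\ge 6$ enters because for smaller $k$ this slack is too meager: the gadget loses rigidity and approximate completions need not cluster near the canonical positions, breaking the decoding step and forcing us to begin the robustness argument at a larger base rank.
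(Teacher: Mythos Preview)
Your proposal has a genuine gap at the very first step. You plan to take the exact-case gadget of \theoremref{psdcompletemain} and show it is robust to entrywise error $\epsilon$. But that gadget is a reduction from \partition, and \partition is only weakly NP-hard: the instance weights $a_1,\dots,a_n$ may be exponentially large integers, and after the normalization $\sum_i a_i = 1$ the individual angles $a_i$ can be exponentially small. The revealed entries of the gadget are values such as $\cos a_i$, so two distinct \partition instances can produce partial matrices that differ by an amount far smaller than any fixed $\epsilon > 0$. No constant per-entry tolerance can survive this, and your stability lemma would fail outright: the ``canonical positions'' you envision are not separated by any gap bounded below in terms of $k$ alone, so the clustering step cannot recover the intended $\pm$ signs.

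The paper handles this by abandoning the \partition reduction entirely and building a new gadget from a strongly NP-hard source, \exactonesat. Each variable gadget forces an orthonormal basis to be one of exactly two discrete rotations (read as $\pm 1$) of a fixed reference basis, with all prescribed inner products bounded away from each other as functions of $k$; the clause gadget then enforces a linear relation on the $\pm 1$ values. The robustness argument is two-layered: first one shows that if only the \emph{external} (cross-gadget) constraints carry error $\delta$ then the $\pm 1$ signs are still determined provided $\delta = O(1/k^2)$; second one shows that entrywise error $\epsilon$ on \emph{all} constraints can be repaired to exact internal constraints at the price of external error $\delta = O(\sqrt{k\epsilon})$, and combining the two thresholds gives $\epsilon = O(k^{-5})$. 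Your ``stability plus clustering'' outline is in the right spirit for this second construction, but it simply does not apply to the \partition gadget you cite. Finally, your explanation of the hypothesis $k\ge 6$ is off: it is not a rigidity issue but an artifact of the new reduction, which for clause arity $\kappa\ge 3$ places the Gram vectors in $\R^{2\kappa}$, so the smallest even matrix rank produced is $2\cdot 3 = 6$.
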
 

\subsection*{Acknowledgments}

We are very grateful to Phil Long for insightful early contributions to this
work. In fact, he first conjectured that  $(k,r)$-\completion should be as hard
as $(k,r)$-\coloring as established by \theoremref{maincomplete}. The authors
also thank the Simons Institute for the Theory of Computing at Berkeley for
its hospitality.

\subsection{Further Related Work}

There have been several hardness results for Matrix Completion over finite
fields drawing on its connection to problems in coding theory.  See, for
example, the discussion in~\cite{HarveyKY06,TanBD12}. The Matrix Completion
problem over the reals seems to behave rather differently and techniques do
not seem to transfer from the finite field case. 

PSD completions are also natural objects in discrete optimization and the study of
the geometry of graphs. We refer the reader to the recent work of E.-Nagy,
Laurent and Varvitsiotis~\cite{ENagyLV13} for a more extensive discussion of 
related work in this area.

\subsection{Proof Overview}
We now give a highlevel outline of our proofs.
\subsubsection{Matrix Completion}
While the hardness assumption in our reduction (\conref{coloring}) is similar in spirit to that of Peeters' original reduction, our proof works in a very different manner.


Let $G = (V,E)$ be a graph with $|V| = n$ and $|E| = m$. Now define the $n \times n$ partial matrix $P_G$ such that $P_G(i,i) = 1$ for every $i \in [n]$, and $P_G(i,j) = 0$ if $(i,j) \in E$. The intuition behind this reduction is that, if $G$ is $k$-colorable with coloring function $f: V \rightarrow [k]$, then 
\[M_f = \sum_{i \in [k]} 1_{f^{-1}(i)}1_{f^{-1}(i)}^T\]
is a rank-$k$ completion of $P_G$. Peeters \cite{Peeters96} showed how to gadgetize a
graph $G$ so that these were the \emph{only} rank-$k$ completions of
$P_G$. However, the gadgets in that work are unable to force any
structure on completions of rank higher than $k$. To decode colorings from high rank completions, we will consider a special factorization of the completion. The row vectors of this factorization will have bounded norm, which will allow us to cover them with a constant number of small balls in $\mathbb{R}^r$. If the balls are small enough, then any two vectors that lie in the same ball cannot have zero dot product, so their corresponding vertices cannot have an edge in $G$. We then use the balls to color the vertices. 

The above argument works when we look at exact completions (or those with entry-wise error bounds). To obtain our main result, \theoremref{maincompleteapprox}, we focus on more general structure of any low-rank completion, in this case the existence of large non-zero rectangles. We will prove, under some mild assumptions on any approximate (in RMSE) completion $M$ of $P_G$, that $M$ has a large non-zero square, which corresponds to an independent set in the graph $G$.
\Bnote{Changed to reflect coloring reduction rather than IS reduction.}
  

\subsubsection{Positive Semidefinite Matrix Completion}
We give two reductions for the $(k,r)$-\psdcompletion problem: one from the
\partition problem and one from a constraint satisfaction problem
\exactonesat. The first reduction has the advantage of being simple but only
works for \emph{exact} completion. Our second reduction is more involved but
is much more robust and it works even when we allow for errors and gives us the theorem on approximate completions from the introduction. 

Consider an instance of the $(k,r)$-\psdcompletion problem with input $A\in(\R\cup\{\bot\})^{n\times n}$. Our goal is to find a PSD matrix $B$ which agrees with $A$ on the set of non-$\bot$ entries. Now, recall that a characterization of PSD matrices is that a $n \times n$ matrix $B$ is PSD if and only if it can be factored as $B = UU^\top$ for some matrix $U$. If we let $u_1,\ldots,u_n$ be the rows of the matrix $U$, then we have $B_{ij} =\iprod{u_i,u_j}$. The vectors $u_1,\ldots,u_n$ are called the \emph{Gram vectors} of $B$. In the context of \psdcompletion, the revealed entries of $A$ place equality constraints on the inner products of the Gram vectors: 
$$\iprod{u_i,u_j} = A_{ij}, \;\; \text{if $A_{ij} \neq \bot$}.$$
Moreover, these constraints completely characterize the problem and finding a rank $r$ solution for the completion problem is equivalent to finding a set of vectors $u_1,\ldots,u_n \in \R^r$ satisfying the above constraints. We will adopt this perspective in our reductions and view the partial matrix as a list of such inner-product constraints. 

We design constraints to simulate $\pm 1$ variables which we can then
use as \emph{gadgets} to reduce from many different problems. For the \partition problem, we follow an idea proposed in \cite{ENagyLV13} and associate every item in the partition with a two-dimensional basis, and constrain that the $(i+1)$th basis is a $\theta$-rotation of the $i$th basis (including the first and $n$th bases), where $\theta$ depends on the element $a_i$ in the \partition problem. This creates a cyclic dependence on the rotations of the bases that forces the total sum of the rotations to be an integer multiple of $2\pi$. However, the important things to note are that these rotations can be in one of two direction: clockwise or counter-clockwise, and if the angles are small enough then the sum of the rotations must be zero. Thus we find a partition based on which rotations went clockwise and which went counter-clockwise. By constraining sums of basis vectors in addition to the basis vectors themselves, we can force the same rotational structure in three dimensions as in two, yielding the gap. 

For the \exactonesat problem, we similarly associate every variable with a basis and use the inner product constraints to force these bases to be special rotations of a reference gadget. We interpret the variable as being $+1$ or $-1$ depending on if the rotation is a "clockwise" or "counter-clockwise" rotation. Because the relations of \exactonesat are linear, i.e. the sum of the values of the variables in each clause is exactly $(k-2)$, it is easy to force satisfying assignments. See Section \ref{cspreduct} for a more thorough description and the appendix for the full details.


\section{Hardness for Matrix Completion}
In this section we show that the matrix completion problem is hard even with relaxed rank constraints and allowing for approximate completions. In particular, we will give a reduction to prove \theoremref{maincompleteapprox}. We defer the proof of \theoremref{maincomplete} to the appendix.

Let $G = (V,E)$ be a graph with $|V| = n$ and $|E| = m$. Now define the partial matrix $P_G \in (\R \cup \bot)^{n \times n}$ as follows:
\begin{equation*}
  P_G(i,j) = \begin{cases} 1 &\text{if $i = j$}\\
0 &\text{if $(i,j) \in E$}\\
\bot &\text{otherwise}    
  \end{cases}.
\end{equation*}
As described in the introduction, if $G$ is $k$-colorable with coloring function $f: V \rightarrow [k]$, then 
\[M_f = \sum_{i \in [k]} 1_{f^{-1}(i)}1_{f^{-1}(i)}^T\]
is a rank-$k$ completion of $P_G$. Note that $M_f$ has coherence $\mu = \frac{n}{k}(\min_i |f^{-1}(i)|)^{-1}$. However, we may assume that there is a perfectly balanced coloring of $G$, for example by copying the graph $k$ times. Thus we can take $M_f$ to have coherence exactly $\mu = 1$. We next prove that under some mild additional assumptions any approximate low-rank completion $M$ of $P_G$ yields a large independent set of $G$. 

\Bnote{Here I explain the use of Linial's lemma}

Our technique relies on a lemma in \cite{Linial07} that guarantees the existence of a good factorization of low rank matrices, in the sense that the norms of the row and column vectors of the factorization are small. We state the lemma here for completeness:
\begin{lemma}\lemmalabel{Linial}
Let $M$ be a matrix with rank $r$. Then there exists an $r$-dimensional factorization $M = XY^T$ such that every row vector of $X$ and $Y$ has norm at most $(cr)^{1/4}$, where $c = \max_{ij} |M(i,j)|$.
\end{lemma}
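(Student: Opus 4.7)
The plan is to exhibit the desired factorization by a rebalancing argument over the space of rank-$r$ factorizations. Start with any initial factorization $M = X_0 Y_0^\top$, for instance the SVD-based choice $X_0 = U \Sigma^{1/2}$, $Y_0 = V \Sigma^{1/2}$ coming from $M = U\Sigma V^\top$. Every other rank-$r$ factorization has the form
\[
M = (X_0 A)(Y_0 A^{-\top})^\top
\]
for some invertible $r \times r$ matrix $A$, so I am free to choose $A$ to shrink the row norms of both factors simultaneously.

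To set up the optimization, define the scale-invariant functional
\[
\gamma(A) \defeq \Bigl(\max_i \|(X_0 A)_i\|\Bigr)\Bigl(\max_j \|(Y_0 A^{-\top})_j\|\Bigr),
\]
where $(\cdot)_i$ denotes the $i$-th row. Because $\gamma$ is invariant under $A \mapsto \lambda A$, I may restrict to $|\det A| = 1$, and a standard compactness argument on this slice of $\mathrm{GL}_r(\R)$ produces a minimizer $A^*$. Write $X = X_0 A^*$ and $Y = Y_0 (A^*)^{-\top}$ for the resulting factorization.

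The heart of the argument is to establish the inequality $\gamma(A^*)^2 \le cr$. This quantity is, up to normalization, the $\gamma_2$-norm of $M$, which admits a semidefinite programming formulation together with a dual characterization in terms of trace-norm-bounded test matrices. Using that $M$ has rank $r$ and $\max_{ij}|M_{ij}| \le c$, one exhibits a feasible dual/primal pair built from the singular vectors of $M$ whose objective value certifies $\gamma_2(M)\le\sqrt{cr}$. Alternatively, one can argue directly via the KKT conditions at $A^*$: the extremal rows of $X$ and $Y$ (those achieving the maximum norm) satisfy a duality relation which, combined with $\langle x_i, y_j\rangle = M_{ij}$ and the fact that these extremal vectors span an at most $r$-dimensional subspace, yields the same bound.

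Once $\gamma(A^*) \le \sqrt{cr}$ is in hand, a final rescaling $A^* \mapsto \lambda A^*$ with $\lambda^2 = \max_j \|y_j\|/\max_i \|x_i\|$ leaves $M$ unchanged but equalizes the two maxima, placing each row norm of $X$ and $Y$ at the desired value $(cr)^{1/4}$. The main obstacle is the central inequality $\gamma_2(M) \le \sqrt{cr}$: the rebalancing set-up and the final $\lambda$-rescaling are routine, but translating the rank-$r$ hypothesis together with the pointwise bound into the dimension-free bound on $\gamma_2(M)$ is where genuine work happens and is precisely the content that we would be importing from the argument of~\cite{Linial07}.
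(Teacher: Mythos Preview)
Your proposal is a valid existence argument and is essentially the non-constructive route that the paper attributes to~\cite{Linial07} (compactness/John's theorem in $\R^r$, followed by a scalar rebalancing). The paper's own proof in the appendix does something genuinely different: its goal is a \emph{polynomial-time algorithm}, not mere existence. It first solves the natural SDP minimizing $\max(\max_i\|u_i\|^2,\max_j\|v_j\|^2)$ subject to $\langle u_i,v_j\rangle=M(i,j)$; the optimum is at most $\sqrt{cr}$ by the existence result, but the SDP solution may live in dimension as large as $m+n+1$. The paper's contribution is a rank-reduction step: project the rows of $V$ onto the row space of $U$, argue that the resulting $V'$ has rank exactly $r$, and then re-express both factors in an orthonormal basis for that $r$-dimensional space. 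Each step only shortens rows, so the row-norm bound survives.

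What each approach buys: your $GL_r$-parametrization plus compactness is cleaner for pure existence and sidesteps the rank-reduction issue entirely, since you never leave $\R^r$. The paper's SDP-plus-projection route is longer but constructive, which matters because the downstream reductions (\lemmaref{islemma} and \lemmaref{coloringlemma}) need to \emph{compute} this factorization in polynomial time to extract an independent set or a coloring. One small caution on your side: when you identify $\gamma(A^*)$ with $\gamma_2(M)$ and then invoke SDP duality, you are implicitly using that the unrestricted-dimension optimum is attained by an $r$-dimensional factorization---which is precisely the content of the paper's projection argument. If instead you cite the John-ellipsoid argument of~\cite{Linial07} directly (which already works inside $\R^r$), that circularity disappears. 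Either way, both you and the paper ultimately import the inequality $\gamma_2(M)\le\sqrt{cr}$ from~\cite{Linial07} rather than proving it from scratch.
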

The proof of this statement in its original paper was given in a nonconstructive manner using John's theorem from convex analysis, but such a good factorization can be found in polynomial time using semidefinite programming. See the appendix for the details.
\begin{lemma}\lemmalabel{islemma}
Let $G = (V,E)$ be a graph and define $P_G$ as above with $\Omega \subseteq [n] \times[n]$ the set of revealed entries. Let $M$ be a rank $r$ matrix such that 
\[\sum_{(i,j) \in \Omega} (M(i,j)-P_G(i,j))^2 \leq \epsilon n\]
and $|M(i,j)| \leq c$. Then $G$ has an independent set $T$ of size at least
\[|T| \geq \frac{(1-4(cr)^2\epsilon)n}{r\sqrt{\pi}(8\sqrt{cr})^r}.\]
Moreover, there is a randomized polynomial time algorithm for finding such an independent set given $M$.
\end{lemma}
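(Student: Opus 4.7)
The plan is to carry out the strategy sketched in the introduction: use Linial's lemma to embed the rows of $M$ into a small Euclidean ball in $\mathbb{R}^r$, quantize this ball into a small number of cells, and extract a large independent set from whichever cell accumulates enough good vertices. To begin, I would apply \lemmaref{Linial} to write $M=XY^\top$ with row vectors $x_i$ of $X$ and $y_i$ of $Y$ all of norm at most $(cr)^{1/4}$. Call an index $i$ \emph{good} if $|M(i,i)-1|\le 1/(2cr)$; since every diagonal entry lies in $\Omega$, the RMSE hypothesis yields at most $\epsilon n/(1/(2cr))^2=4(cr)^2\epsilon n$ bad indices, which is exactly where the $(cr)^2$ factor in the numerator of the stated bound comes from.

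Next I would cover the radius-$(cr)^{1/4}$ ball in $\mathbb{R}^r$ by disjoint cells of diameter at most $\delta=\tfrac{1}{2}(cr)^{-1/4}$; a volumetric argument using the exact volume of a Euclidean $r$-ball shows that $N\le (8\sqrt{cr})^r/(r\sqrt{\pi})$ cells suffice, producing the denominator of the stated bound. The key geometric observation is that whenever $x_i$ and $x_j$ lie in the same cell and $j$ is good,
\[|M(i,j)-M(j,j)|=|\langle x_i-x_j,y_j\rangle|\le 2\delta(cr)^{1/4}=1/2,\]
so $M(i,j)\ge 1-1/(2cr)-1/2\ge 1/4$ for $cr\ge 2$. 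Consequently any edge $(i,j)\in E$ with both endpoints good and in the same cell forces $(M(i,j)-0)^2\ge 1/16$, and at most $16\epsilon n$ such \emph{violating} edges can exist across the whole graph.

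The endgame is a pigeonhole over cells. For each cell $B$ define $\Phi(B)=|\{\text{good }i:x_i\in B\}|-|\{\text{violating edges inside }B\}|$; summing over the disjoint cells gives $\sum_B\Phi(B)\ge n-4(cr)^2\epsilon n-16\epsilon n$, so some cell $B^\ast$ achieves $\Phi(B^\ast)\ge(1-4(cr)^2\epsilon)n/N$ after absorbing the $16\epsilon$ into the $4(cr)^2\epsilon$ term. By the geometric observation every edge of $G$ inside $B^\ast$ with both endpoints good is violating, so removing one endpoint from each yields an independent set of the required size. Linial's factorization is produced in polynomial time by the semidefinite program mentioned after \lemmaref{Linial}, and $B^\ast$ is found either by enumerating a $\delta$-net (polynomial for constant $r$) or by sampling a random good vertex as the cell centre and boosting success probability by repetition.

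The main obstacle I anticipate is the careful bookkeeping in the last two steps: the threshold $1/(2cr)$, the covering radius $\delta$, and the edge margin $1/4$ must be tuned together so that the diagonal loss ($\propto (cr)^2\epsilon n$), the edge loss ($\propto\epsilon n$), and the geometric slack all assemble into the stated coefficient $4(cr)^2\epsilon$ and the exact covering number $r\sqrt{\pi}(8\sqrt{cr})^r$. Verifying that the promised volumetric bound on $N$ is actually tight requires the precise volume of a Euclidean ball together with Stirling-type estimates, which is where the $r\sqrt{\pi}$ factor ultimately comes from.
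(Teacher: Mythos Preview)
Your strategy is genuinely different from the paper's and is sound in outline, but it will not reproduce the precise constant in the lemma as stated, and the place where it breaks is exactly the step you flagged as the obstacle.

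The paper does not use a $\delta$-net at all. After throwing away the (at most $\epsilon n/\delta^2$) indices whose revealed entries deviate by more than $\delta$, it factors the remaining submatrix via \lemmaref{Linial}, normalizes $\tilde u_i,\tilde v_i$, chooses a \emph{random unit vector} $x\in S^{r-1}$, and sets $T=\{i:\theta(\tilde u_i,x)<\phi/2\text{ and }\theta(\tilde v_i,x)<\phi/2\}$ for a suitable angle $\phi$. Any two indices in $T$ then have $\theta(\tilde u_i,\tilde v_j)<\phi$, which forces $u_i\cdot v_j>\delta$ and hence $(i,j)\notin E$; no ``violating edges'' ever need to be deleted. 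The factor $r\sqrt{\pi}(8\sqrt{cr})^r$ arises from lower-bounding $\Pr[i\in T]$ by the area ratio of a spherical cap to $S^{r-1}$, together with a Taylor estimate on $\sin\bigl(\tfrac12(\phi-\theta(\tilde u_i,\tilde v_i))\bigr)\ge 1/(8\sqrt{cr})$ when $\delta=1/(2cr)$.

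Your volumetric covering argument cannot produce that denominator. A standard packing/covering bound for balls of diameter $\delta$ inside a ball of radius $R=(cr)^{1/4}$ gives only $N\le (1+2R/(\delta/2))^r\approx (8\sqrt{cr})^r$ or worse; the constants $c_r=\pi^{r/2}/\Gamma(r/2+1)$ cancel in any volume ratio, so there is no way to manufacture the extra $1/(r\sqrt{\pi})$ from a net. That factor is intrinsically a spherical-cap quantity, not a covering number. (There is also a minor slip: with diameter $\delta$ you get $|\langle x_i-x_j,y_j\rangle|\le\delta(cr)^{1/4}$, not $2\delta(cr)^{1/4}$.) Your approach would yield a correct lemma of the same shape with a somewhat worse constant in the base---perfectly adequate for \theoremref{maincompleteapprox}, where $r=O(1)$, but not the bound written. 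If you want the stated constants, you should switch to the random-cone argument; if you keep the net, state the lemma with your own constant and note that it suffices for the downstream theorem.
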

\begin{proof}
By a simple averaging argument, there can be only $\frac{\epsilon}{\delta^2}n$ entries of $M$ that are different from $P_G$ by more than $\delta$. Thus there are at least $(1-\epsilon/\delta^2)n$ rows and columns such that $|M(i,j) - P_G(i,j)| \leq \delta$ for any row $i$ and column $j$. Let $M'$ be this submatrix of $M$. Certainly rank$(M') \leq \text{rank}(M) = r$, so \lemmaref{Linial} tells us we can find a factorization $XY^T = M$ with row vectors $u_i$ and $v_i$ such that $\|u_i\|,\|v_j\| \leq (cr)^{1/4}$ for all $i,j \in [n]$. Let $\theta(u_i,v_j)$ denote the angle between vectors $u_i$ and $v_j$. Since $u_i \cdot v_i \geq 1-\delta$ for all $i \in [n]$, we derive
\[\cos(\theta(u_i,v_i)) \geq \frac{(1-\delta)}{\sqrt{cr}} \text{ and } \|u'_i\|,\|v'_i\| \geq (1-\delta)(cr)^{-1/4}.\]
Now in order to find an independent set, we have to look for entries with $M(i,j) > \delta$ to be assured that indeed $(i,j) \notin E$. From the bound on the norms of $u_i$ and $v_j$, if $\cos\theta(u_i,v_j) > \delta\frac{\sqrt{cr}}{(1-\delta)^2}$, then $M(i,j) > \delta$. In order to capture these points, we will pick the points in a random cone. Let $\phi$ denote the angle such that $\cos \phi = \delta\frac{\sqrt{cr}}{(1-\delta)^2}$. Our random procedure to find $T$ is
\begin{itemize}
\item Normalize $\tilde{u}_i = u_i/\|u_i\|$ and $\tilde{v}_i = v_i/\|v_i\|$.
\item Pick a random unit vector $x \in \R^r$. 
\item For every $i \in S$, if $\tilde{u}_i \cdot x > \cos (\phi/2)$ and $\tilde{v}_i \cdot x > \cos (\phi/2)$ then put $i \in T$.
\end{itemize}
For $i,j\in T$, since $\theta(\tilde{u}_i,x) < \phi/2$ and $\theta(\tilde{v}_i,x) < \phi/2$, by triangle inequality, $\theta(\tilde{u}_i,\tilde{v}_i) < \phi$. As noted above, since $\cos \theta(\tilde{u}_i,\tilde{v}_i) > \cos \phi$, we get $u_i \cdot v_j > \delta$, thus $M(i,j) > \delta$ and so $P_G(i,j) > 0$. We bound $|T|$ by checking the probability that $i$ is placed in $T$. For each $i$, let $w_i$ be the angle bisector of $\tilde{u}_i$ and $\tilde{v}_i$ and define
\[A_i = \left\{x: \|x\|=1, \theta(x,w_i) < \frac{1}{2}\left(\phi - \theta(\tilde{u}_i,\tilde{v}_i)\right)\right\}.\]
We will show that if $x \in A_i$ was chosen as our random vector, then $i \in T$. This implies that the probability that $i \in T$ is at least area$(A_i)/\text{area}(S^{r-1})$. For $A_i$ to have positive area, we need $\delta$ small enough that $\phi > \theta(\tilde{u}_i,\tilde{v}_i)$. To this end, pick $\delta = 1/2cr$. It is a standard argument that the area of $A_i$ is bounded below by the $(r-1)$-volume of a sphere with radius
\[b_i = \sin\left(\frac{\phi - \theta(\tilde{u}_i,\tilde{v}_i)}{2}\right)\]
Now noting that $\cos \phi = \delta\sqrt{cr}/(1-\delta)^2$ and $\cos \theta_i \geq (1-\delta)/\sqrt{cr}$ and using a Taylor Series approximation
\begin{align*}
\sin\frac{1}{2}\left(\cos^{-1}\left(\frac{x^2/2}{x(1-x^2/2)^2}\right) - \cos^{-1}\left(x(1-x^2/2)\right)\right) &\geq \frac{x}{4} - O(x^3) \geq \frac{x}{8}
\end{align*}
where $x = 1/\sqrt{cr}$ and the last inequality follows as long as $\sqrt{cr} \geq 1$. Now 
\[\frac{\text{area}(A_i)}{\text{area}(S^{r-1})} \geq \frac{b_i^{r-1}}{r\sqrt{\pi}},\]
and thus $i \in T$ with probability at least $1/r\sqrt{\pi}(8\sqrt{cr})^r$. Now using linearity of expectation,
\[E[|T|] \geq \frac{(1-\epsilon/\delta^2)n}{2r\sqrt{\pi}(8\sqrt{cr})^r} \geq \frac{(1-4(cr)^2\epsilon)n}{r\sqrt{\pi}(8\sqrt{cr})^r}.\]
\end{proof} 
The above reduction produces a partial matrix $P_G$ that has $|V| + |E|$ revealed entries, which could be much less than $0.9n^2$ if the graph $G$ is sparse. However, we can simply pad the matrix $P_G$ with zeros, i.e. output the $10|V| \times 10|V|$ matrix
\[P'_G = \left[\begin{tabular}{cc} $P_G$ & $0$ \\ $0$ & $0$\end{tabular}\right].\]
Combined with \lemmaref{islemma}, the above implies \theoremref{maincompleteapprox}. \Bnote{Put the section on padding with zeros back in}
\section{Hardness for Positive Semidefinite Matrix Completion}

To prove hardness for the $(k,r,p,\epsilon)$-\completion problem we appealed to a conjectured coloring hardness. In this section we show that this assumption can be weakened to the usual $\mathrm{NP}$-hardness if the matrices under consideration are positive semi-definite. In particular, in this section we prove \theoremref{psdcompletemain}. We first present the hardness for the exact completion problem with $\epsilon = 0$ with a reduction from \partition. We then sketch the second reduction from \exactonesat that is capable of handling errors on the constraints and proves \theoremref{psdapproxcompletemain}. The full details on the second reduction are deferred to the appendix.
\subsection{Exact Completion}
Our reduction is similar to Theorem 3.3 in \cite{ENagyLV13}, but with extra constraints to retain structure in higher rank completions. We will reduce from the partition problem, i.e. given numbers
$a_1,\dots,a_n$, find a set $I \subseteq [n]$ such that $\sum_{i \in I}
a_i = \sum_{i \notin I} a_i$. We will reduce this problem to
$(2,3)$-\psdcompletion and amplify the gap. Recall that a partial PSD matrix is equivalent to a list of inner product constraints. Given an instance $(a_1,\dots,a_n)$, we will output a set of constraints on $3n$ different vectors. These vectors will be indexed by $I = [n] \times [3]$. Assume without loss of generality that $\sum_i a_i = 1$. Now constrain
\begin{itemize}
\item $u_s \cdot u_s = 1$ for all $s \in I$
\item $u_{(i,1)} \cdot u_{(i,2)} = 0$ for all $i \in [n]$.
\item $u_{(i,3)} \cdot u_{(i,1)} = u_{(i,3)}\cdot u_{(i,2)} = \frac{1}{\sqrt{2}}$. Equivalently, $u_{(i,3)} = \frac{1}{\sqrt{2}}(u_{(i,1)} + u_{(i,2)})$. 
\item $u_{(i,1)}\cdot u_{(i+1,1)} = u_{(i,2)} \cdot u_{(i+1,2)} = u_{(i,3)} \cdot u_{(i+1,3)} = \cos a_i$ for all $i \in [n]$, where addition is performed modulo $n$.
\end{itemize}
The intuition here is that in a rank-$2$ decomposition of this matrix,
for every $i$ $\{u_{(i,1)},u_{(i,2)}\}$ is an orthonormal basis, and
the $(i+1)$st basis is an angle $a_i$-rotation of the $i$th basis.
This rotation can be in one of two directions, clockwise or
counter-clockwise.  However since the $1$st basis is an $a_n$-rotation
of the $n$th basis, after rotating by every $a_i$ we must be back
where we started. Since $\sum_i a_i = 1 < 2\pi$, this means that the
total rotation must be zero, so we partition the $a_i$ based on
whether the corresponding rotation was clock-wise or counterclockwise. The additional constraints on the sums of basis vectors will force this structure even in a rank-$3$ decomposition. 
\begin{lemma} \label{lem:twodimensions} There is a set of vectors $\{u_s\}_{s\in I}$ lying in $\R^3$ satisfying the above constraints if and only if there is a set of such vectors lying in $\R^2$. 
\end{lemma}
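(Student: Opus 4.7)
The plan is to establish the nontrivial ($\R^3 \Rightarrow \R^2$) direction by showing that any 3-dimensional configuration satisfying the constraints must in fact lie entirely in a common 2-plane of $\R^3$, which we then identify with $\R^2$; the converse is immediate from the inclusion $\R^2 \subseteq \R^3$. Let $P_i := \operatorname{span}(u_{(i,1)}, u_{(i,2)})$; note that each $P_i$ is 2-dimensional because $u_{(i,1)} \perp u_{(i,2)}$ and both are unit vectors, and by definition $u_{(i,3)} \in P_i$. The central claim is that $P_{i+1} \subseteq P_i$ for every $i$ (indices mod $n$); since all $P_i$ are 2-dimensional, walking around the cycle then forces $P_1 = P_2 = \cdots = P_n$, placing every vector $u_s$ into a single 2-plane.

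To prove $P_{i+1} \subseteq P_i$, I would fix the orthonormal basis $\{u_{(i,1)}, u_{(i,2)}, w\}$ of $\R^3$ with $w$ any unit normal to $P_i$, and expand
\[ u_{(i+1,1)} = \alpha\, u_{(i,1)} + \beta\, u_{(i,2)} + \gamma\, w, \qquad u_{(i+1,2)} = \alpha'\, u_{(i,1)} + \beta'\, u_{(i,2)} + \gamma'\, w. \]
The constraints $u_{(i,j)} \cdot u_{(i+1,j)} = \cos a_i$ for $j = 1,2$ directly give $\alpha = \beta' = \cos a_i$, and the unit-norm conditions yield $\beta^2 + \gamma^2 = \alpha'^2 + \gamma'^2 = \sin^2 a_i$. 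The third-vector constraint $u_{(i,3)} \cdot u_{(i+1,3)} = \cos a_i$, expanded using $u_{(i,3)} = \tfrac{1}{\sqrt{2}}(u_{(i,1)} + u_{(i,2)})$ and the analogue for index $i{+}1$, collapses to the linear identity $u_{(i,1)} \cdot u_{(i+1,2)} + u_{(i,2)} \cdot u_{(i+1,1)} = 0$, i.e., $\alpha' + \beta = 0$. Substituting this into the orthogonality $u_{(i+1,1)} \cdot u_{(i+1,2)} = 0$ cancels the in-plane part and leaves the simple product equation $\gamma \gamma' = 0$.

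A brief case analysis on $\gamma \gamma' = 0$ combined with the magnitude constraints then forces $\gamma = \gamma' = 0$: if $\gamma = 0$ then $\beta^2 = \sin^2 a_i$, hence $\alpha'^2 = \beta^2 = \sin^2 a_i$, which forces $\gamma'^2 = 0$; the case $\gamma' = 0$ is symmetric. Therefore $u_{(i+1,1)}, u_{(i+1,2)} \in P_i$, and since $u_{(i+1,3)}$ is a linear combination of these, $P_{i+1} \subseteq P_i$. The only delicate step is the chain from $u_{(i,3)} \cdot u_{(i+1,3)} = \cos a_i$ to the identity $\alpha' + \beta = 0$; once that is in hand, the rest is mechanical. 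Implicit in the case analysis is $\sin a_i \neq 0$, which is automatic in the \partition reduction since $\sum_i a_i = 1$ ensures each $a_i \in (0,1)$.
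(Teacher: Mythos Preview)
Your proof is correct and follows essentially the same approach as the paper's: both expand $u_{(i+1,1)}, u_{(i+1,2)}$ in the orthonormal frame at index $i$ (the paper packages this as an orthogonal matrix $Q_i$), use the $u_{(i,3)}$-constraint to force the anti-symmetry $\alpha' = -\beta$, and then combine orthogonality with the unit-norm conditions to eliminate the out-of-plane components. Incidentally, your caveat about $\sin a_i \neq 0$ is not actually needed---the case analysis on $\gamma\gamma'=0$ goes through regardless.
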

\begin{proof}
Let $\{u_s\}_{s \in I}$ be a set of vectors lying in $\R^3$ satisfying the constraints. For each $i$, there is an orthogonal transformation $Q_i$ that maps $Q_i(u_{(i,1)}) = u_{(i+1,1)}$, $Q_i(u_{(i,2)}) = u_{(i+1,2)}$. Writing $Q_i$ in the basis $\{u_{(i,1)},u_{(i,2)}, u_{(i,1)} \times u_{(i,2)}\}$, and accounting for the constraints of the $\{u_s\}_{s\in I}$, we have
\[Q_i = \left[\begin{tabular}{ccc} $\cos a_i$ & $x$ &  \\ $-x$ & $\cos a_i$ & $A$ \\ $y$ & $z$ & \end{tabular}\right]\]
but $Q_i$ has orthogonal columns, which implies that $yz = 0$, so either $y = 0$ or $z = 0$. But $Q_i$ also has columns of norm $1$, so $x = \pm\sin a_i$, which implies that both $y$ and $z$ are zero. Thus 
\[Q_i = \left[\begin{tabular}{ccc} $\cos a_i$ & $\pm\sin a_i$ & $0$ \\ $\mp \sin a_i$ & $\cos a_i$ & $0$ \\ $0$ & $0$ & $1$\end{tabular}\right].\]
This implies that $\{u_{(i,1)}, u_{(i,2)}\}$ and $\{u_{(i+1,1)}, u_{(i+1,2)}$ lie in the same plane. Repeating the argument we get that $\{u_{(i,1)},u_{(i,2)}\}_i$ lie in the same plane for every $i$.
\end{proof}
\begin{lemma} \label{lem:partitionpsd} There is a partition of $(a_1,\dots,a_n)$ if and only if there is a set of vectors satisfying the constraints lying in $\R^2$.
\end{lemma}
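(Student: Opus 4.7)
The plan is to reduce both directions of the equivalence to a single observation: in $\R^2$ each orthonormal basis $\{u_{(i,1)}, u_{(i,2)}\}$ is an angle-rotation of the previous one, so the cross-basis inner-product constraints turn the cycle of rotations into a signed-sum equation $\sum_i \epsilon_i a_i \equiv 0 \pmod{2\pi}$ for some choice of signs $\epsilon_i \in \{\pm 1\}$. Because $\sum_i a_i = 1 < 2\pi$, this modular identity forces the exact equality $\sum_i \epsilon_i a_i = 0$, which is precisely a valid partition $I = \{i : \epsilon_i = +1\}$.

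For the forward direction, given a partition $I$ I would set $\epsilon_i = +1$ for $i \in I$ and $\epsilon_i = -1$ otherwise so that $\sum_i \epsilon_i a_i = 0$, then define cumulative angles $\theta_1 = 0$, $\theta_{i+1} = \theta_i + \epsilon_i a_i$, and take $u_{(i,1)} = (\cos \theta_i, \sin \theta_i)$, $u_{(i,2)} = (-\sin \theta_i, \cos \theta_i)$, $u_{(i,3)} = (u_{(i,1)}+u_{(i,2)})/\sqrt{2}$. The unit-norm, within-basis orthogonality, and midpoint constraints are then immediate. The cross-basis inner products $u_{(i,1)} \cdot u_{(i+1,1)}$ and $u_{(i,2)} \cdot u_{(i+1,2)}$ both reduce to $\cos(\epsilon_i a_i) = \cos a_i$, and a short calculation shows that the sine cross-terms in $u_{(i,3)} \cdot u_{(i+1,3)}$ cancel, leaving $\cos a_i$ as well. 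The cyclic closure from $i = n$ back to $i = 1$ is exactly what $\sum_i \epsilon_i a_i = 0$ guarantees.

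For the reverse direction I would parameterize each basis as $u_{(i,1)} = (\cos \theta_i, \sin \theta_i)$ and $u_{(i,2)} = s_i(-\sin \theta_i, \cos \theta_i)$ for some angle $\theta_i$ and handedness bit $s_i \in \{\pm 1\}$. The constraint $u_{(i,1)} \cdot u_{(i+1,1)} = \cos a_i$ on unit vectors then forces $\theta_{i+1} - \theta_i \equiv \epsilon_i a_i \pmod{2\pi}$ for some $\epsilon_i \in \{\pm 1\}$. Summing the $n$ increments around the cycle yields $\sum_i \epsilon_i a_i \equiv 0 \pmod{2\pi}$, and since $\sum_i a_i = 1 < 2\pi$ the congruence becomes exact, giving the desired partition.

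The only step I anticipate needing a little bookkeeping is pinning down the handedness bits $s_i$: the constraint $u_{(i,2)} \cdot u_{(i+1,2)} = \cos a_i$ combined with $\cos a_i > 0$ (which holds because each $a_i \le 1 < \pi/2$) should force $s_i s_{i+1} = 1$, making the $s_i$ globally constant, at which point the $u_{(i,3)}$ constraint becomes automatic in two dimensions. I do not expect any real obstacle here; the cycle argument itself is the true content of the lemma and is essentially immediate once the signs $\epsilon_i$ have been extracted.
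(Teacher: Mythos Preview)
Your proposal is correct and follows essentially the same approach as the paper: parameterize each orthonormal pair by an angle, read off signs $\epsilon_i\in\{\pm1\}$ from the cross-basis constraints, telescope around the cycle to get $\sum_i \epsilon_i a_i \equiv 0 \pmod{2\pi}$, and use $\sum_i a_i = 1 < 2\pi$ to conclude exact equality. The handedness bookkeeping you flag is in fact unnecessary for the reverse direction (only the $u_{(i,1)}$ constraints are used to extract the partition), but it does no harm, and your forward-direction construction is actually cleaner than the paper's, which contains a typo in the definition of $u_{(i,2)}$.
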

\begin{proof}
First, assume there is a partition $(I,\overline{I})$ of $[n]$ and set 
\[\theta_k = \sum_{i \in I,i<k} a_i - \sum_{i \notin I,i<k} a_i\]
and by convention $\theta_1 = 0$. Note that by the definition of $\theta$ and partitions, $\theta_n = \pm a_n$. Now set $u_{(i,1)} = e_1\cos \theta_i + e_2\sin \theta_i$, $u_{(i,2)} = e_1\cos \theta_i - e_2\sin \theta_i$, and $u_{(i,3)} = \frac{1}{\sqrt{2}}(u_{(i,1)} + u_{(i,2)})$ for every $i \in [n]$. Then $u_{(i,1)} \cdot u_{(i+1,1)} = u_{(i,2)} \cdot u_{(i+1,2)} = \cos (\theta_i - \theta_{i+1}) = \cos a_i$ for every $i < n$. Finally, since $\theta_n = \pm a_n$ and $\theta_1 = 0$, $u_{(n,1)}$ and $u_{(n,2)}$ are at an angle $a_n$ with $u_{(1,1)}$ and $u_{(1,2)}$ respectively, so $u_{(n,1)} \cdot u_{(1,1)} = u_{(n,2)} \cdot u_{(n,1)} = \cos a_n$.

Conversely, suppose there is a set of vectors lying in $\R^2$ satisfying the constraints. Since the vectors are unit vectors and $u_{(i,1)} \cdot u_{(i+1,1)} = \cos a_i$, we know $u_{(i+1,1)}$ makes an angle $a_i$ with $u_{(i,1)}$, so for every $i \in [n]$,
\[u_{(i,1)} = \cos \left(\sum_{j=1}^{i-1} s_ja_j\right)u_{(1,1)} + \sin\left(\sum_{j=1}^{i-1} s_ja_j\right)u_{(1,2)}\]
where $s \in \{+1,-1\}^{n}$. Finally, since $u_{(n,1)}$ is at an angle $a_n$ with $u_{(1,1)}$, we have
\[u_{(1,1)} = \cos \left(\sum_{j=1}^{n} s_ja_j\right)u_{(1,1)} + \sin\left(\sum_{j=1}^{n} s_ja_j\right)u_{(1,2)}\]
Since $\sum_j a_j = 1 < 2\pi$, we must have $\sum_{j=1}^n s_ja_j = 0$.
Hence the set $I = \{i: s_i = +1\}$ yields a solution to the
\partition problem on the instance $(a_1,\ldots, a_n)$. 
\end{proof}
The $\mathrm{NP}$-hardness of $(2,3)$-\psdcompletion follows from
\prettyref{lem:twodimensions} and \prettyref{lem:partitionpsd}. There's a simple amplification one can do to prove hardness for
$(k,2k-1)$-\psdcompletion for any even $k$. Simply take the matrix $A$ from the $(2,3)$-\psdcompletion reduction and output the matrix
\[M = \left[\begin{tabular}{cccc} $A$ & $0$ & $\cdots$ & $0$ \\ $0$ & $A$ & $\cdots$ & $0$ \\ $\vdots$ & $\vdots$ & $\ddots$ & $\vdots$ \\ $0$ & $0$ & $\cdots$ & $A$\end{tabular}\right]\]
where $A$ appears $k/2$ times. Then any completion of $M$ of rank at most $2k-1$ will restrict to a completion of $A$ of rank at most $3$. We can also pad with additional zeros to boost the number of revealed entries up to $0.9n^2$. This completes the proof of \theoremref{psdcompletemain}.
\Bnote{Added a brief descriptive section on tolerating errors in the PSD case}
\subsection{Tolerating Errors}
\label{cspreduct}
In this section we sketch the reduction in the proof of \theoremref{psdapproxcompletemain}. Because \partition is only $\mathrm{NP}$-hard in a weak sense (there exists an algorithm which runs in time polynomial in the size of the weights $a_1,\ldots,a_n$), we require a different starting problem to prove hardness while tolerating errors in the constraints. 
\begin{theorem}\theoremlabel{psdsoundness}
For every constant $k\geq 3$, constant $r$ with $2k \leq r \leq 4k-1$, and $\epsilon < O(r^{-5})$, there is a reduction from \exactonesat that, given an instance $\Phi$, outputs a partial matrix $P_\Phi$ with the following property: If there is a rank $r$ matrix $M$ such that $|M(i,j) - P_\Phi(i,j)| < \epsilon$ whenever $P_\Phi(i,j) \neq \bot$, then $\Phi$ is satisfiable. Furthermore, if $\Phi$ is satisfiable, there is a rank $k$ completion of $P_\Phi$.
\end{theorem}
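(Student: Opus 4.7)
The plan is to build a partial PSD matrix whose entries encode both Boolean variables and the arithmetic linear constraints of \exactonesat, leveraging the same kind of rigidity gadget used in \prettyref{lem:twodimensions}. For each variable $x_i$ of $\Phi$ I would allocate a triple of vectors $(u_i^{(1)}, u_i^{(2)}, u_i^{(3)})$, constraining $u_i^{(1)} \perp u_i^{(2)}$ of unit norm and $u_i^{(3)} = (u_i^{(1)} + u_i^{(2)})/\sqrt{2}$; as in the exact case, this sum-vector gadget forces each variable frame to lie essentially in a $2$-plane, even in a rank-$r$ completion. A fixed reference frame $(w^{(1)}, w^{(2)}, w^{(3)})$ is also included, and inner-product constraints of the form $u_i^{(1)} \cdot w^{(1)} = \cos \alpha$ are imposed for a single fixed angle $\alpha$, so that every variable frame is a rotation of the reference by $\pm \alpha$; the choice of direction encodes $x_i \in \{+1,-1\}$.

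For each clause $C = (x_{i_1}, \ldots, x_{i_k})$ of \exactonesat I would then chain the variable frames through a sequence of intermediate ``accumulator'' frames, imposing inner product equalities between consecutive links so that the total accumulated rotation along the chain is a linear function of $\sum_j x_{i_j}$. By fixing the inner product between the first and last frame of the chain to $\cos \alpha_C$ where $\alpha_C$ is chosen to be consistent only with $\sum_j x_{i_j} = 2-k$ (i.e., exactly one literal is $+1$), the clause constraint is enforced. Completeness is easy: given a satisfying assignment, place all frames in the same $2$-plane with the appropriate $\pm\alpha$ rotations, producing a rank-$2$ PSD completion; then amplify to rank $k$ via the block-diagonal construction used at the end of \theoremref{psdcompletemain}, which also lets us push the lower-rank obstacle from $3$ to $2k-1$ (and in fact up to any $r \le 4k-1$ by further amplification, matching the range in the statement).

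The soundness direction in the approximate setting is where the real work lies. We need a quantitative analogue of \prettyref{lem:twodimensions}: in the exact case the proof used that the transition matrix $Q_i$ between consecutive frames has orthogonal unit columns, and deduced $y = z = 0$ algebraically. For $\epsilon$-perturbed inner products, the analogous analysis yields $|y|,|z| = O(\sqrt{\epsilon})$, so each variable frame is $O(\sqrt{\epsilon})$-close (in angle) to a planar rotation. One then argues that along the clause chains, the $O(r)$ accumulated perturbations add up to an angular error of $O(r \sqrt{\epsilon})$, which must be smaller than the separation between the discrete legal rotations --- both the $\pm\alpha$ gap (encoding which literals are true) and the gap between $\sum_j x_{i_j} = 2-k$ and its nearest bad neighbors $\sum_j x_{i_j} = 4-k$ or $-k$. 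Choosing $\alpha = \Theta(1/k)$ so that $k\alpha$ remains well inside a single period, these gaps are $\Omega(1/k)$, and the requirement $r\sqrt{\epsilon} \ll 1/r$ gives the $\epsilon < O(r^{-5})$ bound stated in the theorem (the additional factors of $r$ come from converting error on inner products of bounded-norm vectors into angular error in the frames).

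The main obstacle I anticipate is exactly this robustness analysis: the exact argument is essentially algebraic and collapses cleanly only at $\epsilon = 0$, so each gadget must be replaced by a stability lemma that tracks first-order perturbations through the chain without multiplicative blow-up, and the angles $\alpha$ and $\alpha_C$ must be chosen simultaneously so that (a) the planarity error from the sum gadget, (b) the per-link rotation error along the accumulator chain, and (c) the combinatorial separation between satisfying and non-satisfying clause assignments can all be balanced under a single $\epsilon$ regime. Once this quantitative rigidity is established, the passage from an approximate rank-$r$ completion to a satisfying assignment of $\Phi$ is immediate: read off the sign of the rotation direction of each variable frame.
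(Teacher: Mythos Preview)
Your proposal has the right high-level intuition about rotation gadgets and perturbation analysis, but the actual construction in the paper differs in a way that matters, and your approach has a real gap.

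The paper does \emph{not} use $2$-dimensional variable frames together with block-diagonal amplification. Instead, each variable is assigned a full $2k$-dimensional orthonormal basis $\{u_{(x,i)}\}_{i\in[2k]}$, plus all normalized pairwise sums and differences. The external constraints force the orthogonal map $Q$ sending the reference basis to a variable basis to be (approximately) a direct sum of $k$ identical $2\times 2$ rotation blocks. The crucial soundness step is a dimension count: writing $Q=\begin{pmatrix}Q'&A\\ B&C\end{pmatrix}$ in the reference basis, $B$ is $(r-2k)\times 2k$, so it has a nonzero null vector precisely when $r<4k$; applying $Q$ to that null vector forces $\|Q'x\|=1$ and hence $|a|\ge 1-O(\delta k)$. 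This is what produces the range $r\le 4k-1$, and it genuinely requires the $2k$-dimensional basis gadget. Your claim that block-diagonal amplification of a $2$-dimensional gadget can push soundness from $3$ up to $4k-1$ is incorrect: with $k/2$ orthogonal blocks the total rank is the sum of block ranks, so a rank-$r$ completion only guarantees some block has rank $\le \lceil 2r/k\rceil$, which gives at best $r\le 2k-1$ (exactly the Partition bound), never $4k-1$.

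Second, the clause encoding in the paper is not a chain. Because each variable already owns $k$ separate $2$-planes, clause $C=\{x_{i_1},\dots,x_{i_k}\}$ is checked by a single vector $u_C=\tfrac{1}{\sqrt k}\sum_g s_g\,u_{(x_{i_g},2g)}$ and a single reference vector $u_{C_0}=\tfrac{1}{\sqrt k}\sum_g u_{(x_0,2g-1)}$; the constraint $u_C\cdot u_{C_0}=1-2/k$ directly reads off $\tfrac1k\sum_g s_g f(x_{i_g})$. Your accumulator-chain idea requires enforcing ``the angle between accumulator$_j$ and accumulator$_{j-1}$ equals the angle between variable$_{i_j}$ and the reference,'' i.e.\ an equality between two unknown inner products. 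That is not an entry of a partial Gram matrix, and I do not see how to simulate it with $2$-dimensional frames without collapsing back to a fixed-constant constraint (which then decouples the chain from the variable). The paper's $2k$-dimensional layout sidesteps this entirely by giving each literal its own orthogonal slot. Your instincts on the robustness side (repairing approximate internal constraints at cost $O(\sqrt{\epsilon})$, then checking a $\Theta(1/k)$ combinatorial gap, yielding $\epsilon=O(k^{-5})$) are essentially what the paper does, but that analysis rides on the gadget above, not on a chain.
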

There are two main components to the reduction in \theoremref{psdsoundness}. The first is the \emph{variable gadget}, a set of constraints that forces only two configurations for a set of vectors, and we can interpret these configurations as being a $+1$ or $-1$ assignment to the variable. The second is the \emph{clause gadget}, a set of constraints designed to force the interpreted assignment to be satisfying. 

For each variable in the instance $\Phi$, the variable gadget is a set of constraints that creates a $2k$-dimensional orthonormal basis. There is also a special "reference basis" that we use as a reference point because inner product constraints are invariant to rotations. For each variable, we constrain its basis to be a special rotation of the reference basis. The rotation is special in the sense that it is a set of identical rotations in $k$ pairs of two-dimensional subspaces. Because a rotation in two dimensions has exactly two configurations, rotate clockwise or rotate counter-clockwise, there are only two possible rotations of the variable's basis. We interpret each of these rotations as setting the variable to $+1$ or $-1$. 

For each clause in $\Phi$, the clause gadget is a set of constraints that are intended to construct a vector whose $i$th coordinate is the value (either $+1$ or $-1$) of the $i$th variable appearing in $\Phi$. Finally, we constrain that the sum of the elements of $\Phi$ is exactly $(k-2)$. This forces exactly one of the coordinates of the vector to be $-1$, so the variables must be set to a satisfying assignment.

It is not obvious how we are able to force such specific structure on the rotations in the variable gadget even when the ambient dimension gets as high as $4k-1$. By constraining dot products of sums of basis vectors in the variable gadget we are able to resolve this problem. The full details and proof of \theoremref{psdsoundness} are located in the appendix for the interested reader.


\section{Conclusion and Open Problems}

Our goal was to narrow the gap between existing positive results on Matrix
Completion and computational lower bounds. For a hardness result to be
compelling it must account for natural algorithmic relaxations. We showed that
several relaxations that are natural from an algorithmic machine learning
point of view do not make the problem easier. From a complexity theoretic
perspective, these are the first hardness of approximation results for 
Matrix Completion over the reals. A consequence of our work is that the
popular incoherence assumption by itself is not sufficient to make the problem
tractable. An interesting question is if conversely the assumption of
uniformly random entries by itself already makes the problem easy.

\begin{question}
Is Matrix Completion hard when the observed entries are chosen randomly, but
the observed matrix is \emph{not} incoherent?
\end{question}

Another challenging question is to determine the precise hardness threshold for
$(k,r)$-completion.

\begin{question}
For any $k\ge 3,$ what is the largest $r\ge k$ such that $(k,r)$-\completion is hard? Can we
find a matching algorithm?
\end{question}

Resolving this question will likely require progress on both lower bounds and
algorithms. Deceptively simple algorithmic questions are still open such as
the following.

\begin{question}
We know that $(3,\sqrt{n})$-\coloring is easy~\cite{Wigderson82}. 
Is $(3,\sqrt{n})$-\completion easy?
\end{question}

\bibliographystyle{alpha}
\bibliography{moritz,matrixcompletion_rough}

\appendix
\section{\hilight{Finding Good Factorizations of Low Rank Matrices}}
In this section we will prove a constructive version of
\lemmaref{Linial}, restated below for convenience.

\begin{lemma}(Lemma 2.1 restated)
Let $M$ be a matrix with rank $r$. Then there exists an $r$-dimensional factorization $M = XY^T$ such that every row vector of $X$ and $Y$ has norm at most $(cr)^{1/4}$, where $c = \max_{ij} |M(i,j)|$.
\end{lemma}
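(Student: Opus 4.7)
The plan is to prove existence by the classical ``choose a good gauge'' trick using John's theorem. Starting from any rank-$r$ factorization $M = AB^\top$ with $A, B \in \mathbb{R}^{n \times r}$, observe that for every invertible $T \in \mathbb{R}^{r \times r}$ the pair $(X, Y) = (AT^\top, BT^{-1})$ is again a rank-$r$ factorization of $M$, since $(AT^\top)(BT^{-1})^\top = AT^\top T^{-\top} B^\top = AB^\top$. So the entire task is to pick $T$ so that both $\max_i \|x_i\|$ and $\max_j \|y_j\|$ are small simultaneously, where $x_i = Ta_i$ and $y_j = T^{-\top} b_j$ are the rows of $X$ and $Y$.

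The geometric step I would take is to apply John's theorem to a convex body built from the rows of $A$. Let $a_1, \ldots, a_n \in \mathbb{R}^r$ denote these rows and consider the centrally symmetric convex body $K = \operatorname{conv}\{\pm a_1, \ldots, \pm a_n\}$. John's theorem for symmetric bodies guarantees an ellipsoid $E$ (the John ellipsoid of $K$) with $E \subseteq K \subseteq \sqrt{r}\,E$. Choose $T$ as the linear map sending $E$ to the unit ball $B_2^r$. The outer inclusion $TK \subseteq \sqrt{r}\,B_2^r$ immediately gives $\|Ta_i\| \leq \sqrt{r}$ for every $i$, bounding the row norms of $X$.

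The bound on the other factor comes from the inner inclusion $B_2^r \subseteq TK$ together with the entry bound $|M_{ij}| \leq c$. Every unit vector $u$ lies in $TK$, so it admits a representation $u = \sum_i \mu_i\,Ta_i$ with $\sum_i |\mu_i| \leq 1$. Pairing with any row $b_j$ of $B$, the transformations cancel:
\[
\langle u,\; T^{-\top} b_j \rangle \;=\; \sum_i \mu_i\, \langle Ta_i,\, T^{-\top} b_j \rangle \;=\; \sum_i \mu_i\, \langle a_i, b_j \rangle \;=\; \sum_i \mu_i\, M_{ij},
\]
which is bounded by $c$ in absolute value. Taking the supremum over unit $u$ yields $\|T^{-\top} b_j\| \leq c$ for every $j$.

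Finally, I would balance the two bounds by a scalar rescaling $X \leftarrow \alpha X$, $Y \leftarrow Y/\alpha$, which preserves $M = XY^\top$. The product $\max_i \|x_i\| \cdot \max_j \|y_j\|$ is invariant under this rescaling and at most $c\sqrt{r}$, so choosing $\alpha$ optimally makes both row-norm maxima comparable and of order $(c\sqrt{r})^{1/2}$. Combined with an initial normalization of $M$ by its maximum entry before invoking the geometric step, this lands on the claimed $(cr)^{1/4}$ bound. The main obstacle I anticipate is the constant-chasing in this balancing/normalization step; the existence itself and the geometric heart of the proof fall out cleanly as soon as the right gauge $T$ is in hand.
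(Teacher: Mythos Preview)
Your argument is correct and is precisely the nonconstructive John's theorem proof that the paper attributes to \cite{Linial07} but does not reproduce. The paper's own appendix proof targets something different: a \emph{polynomial-time algorithm} that actually computes the factorization. It does this by (i) solving an SDP whose variables are Gram vectors $u_i,v_j$ subject to $\langle u_i,v_j\rangle = M(i,j)$, minimizing $\max\{\|u_i\|^2,\|v_j\|^2\}$, and then (ii) compressing the (possibly $(m{+}n{+}1)$-dimensional) SDP solution back to $r$ dimensions by projecting the $v_j$ onto the row span of the $u_i$ and writing everything in an orthonormal basis of that span. Crucially, the paper's argument \emph{uses} the existence result you prove as a black box to certify that the SDP optimum is small enough. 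So your proof supplies exactly the ingredient the paper takes for granted, while the paper supplies the algorithmic wrapper its applications need.

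One remark on the constant, which you flagged yourself: your balancing step yields $\max_i\|x_i\|,\max_j\|y_j\|\le (c\sqrt r)^{1/2}=c^{1/2}r^{1/4}$, not $(cr)^{1/4}$, and the ``initial normalization of $M$ by $c$'' you propose does not change this---the bound is homogeneous of the correct degree in $c$. A dimension check (scale $M\mapsto\lambda M$) shows the row-norm bound must scale like $\sqrt c$, so the lemma as written appears to have a minor misprint; the paper only applies it with $c\ge 1$ a fixed constant, where the discrepancy is harmless. Your instinct that the only obstacle is constant-chasing, not the geometry, is exactly right.
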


In \cite{Linial07} this lemma was proven in a
non-constructive manner using John's Theorem from convex analysis.
However, as the authors therein note, there is a simple semi-definite
program (SDP) one can write to compute a good factorization.
Unfortunately, this SDP may give a factorization that is not
$r$-dimensional, but instead up to $(m+n+1)$-dimensional. Here we give
a simple self-contained algorithm to compute the decomposition
$M = XY^T$. 

\begin{proof}[Proof (of an algorithm to construct decomposition as in \lemmaref{Linial})]
Let $M$ be a rank $r$ matrix, and let $c = \max_{ij} |M(i,j)|$. To construct a good $r$-dimensional factorization of $M$, we proceed in a few steps:
\begin{enumerate}
\item Compute a factorization $M = UV^T$ such that $U$ and $V$ have
	short row vectors using semi-definite programming. This
	factorization is not necessarily $r$-dimensional. 
\item Project every row of $V$ onto the row space of $U$ to get $V'$. 
\item Factor $V' = YB$, where $B$ is an orthonormal basis for the row space of $V'$.
\item Output $X = UB^T$ and $Y$. This will be an $r$-dimensional factorization with short rows. 
\end{enumerate}
For the first step, consider the following SDP with variables
$\{u_i\}_{i=1}^m$, $\{v_j\}_{j=1}^n$ and $\eta$.
\begin{align*}
	\text{ Minimize  }  & \eta \\
	\text{ Subject to } \\
	& u_i \cdot v_j = M(i,j) & \forall i \in \{1,\ldots, m\} j \in
	\{1,\ldots, n\} \\
	& u_i \cdot u_i \leq \eta & \forall i \in \{1,\ldots, m\} \\
	& v_j \cdot v_j \leq \eta & \forall j \in \{1,\ldots, n\} \\
\end{align*}

This SDP computes the row vectors of two matrices $U$ and $V$ such
that $M = UV^T$ and minimizes their row norms. The lemma in
\cite{Linial07} guarantees that there exists a factorization with the
norm of every row vector of $U$ and $V$ bounded by $(cr)^{1/4}$, and
thus the factorization found by the SDP is at least this good.
However, since there are $m+n+1$ variables, the vectors returned could
be up to $(m+n+1)$-dimensional. 

Let $V'$ be the matrix whose rows are the rows of $V$ projected onto
the row space of $U$. Note that we still have $M = U(V')^T$, and
projections can only decrease the lengths of the rows of $V$.  We
make the following claim.

\begin{claim}
Rank of $V'$ is equal to rank of $M$, i.e., $r$.
\end{claim}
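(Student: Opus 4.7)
The plan is to prove two inequalities: $\mathrm{rank}(V') \geq r$ and $\mathrm{rank}(V') \leq r$.

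For the lower bound, I would first observe that the projection step does not change the product $UV^T$. Indeed, decompose each row $v_j$ of $V$ as $v_j = v_j' + w_j$, where $v_j'$ lies in the row space of $U$ and $w_j$ lies in its orthogonal complement; then $u_i \cdot v_j = u_i \cdot v_j'$ for every $i, j$, so $M = UV^T = U(V')^T$. Since the rank of a matrix product is bounded by the rank of either factor, this immediately gives $\mathrm{rank}(V') \geq \mathrm{rank}(M) = r$.

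For the matching upper bound, the key ingredient I plan to use is the injectivity of a certain linear map. Let $W$ denote the row space of $U$, and define $T \colon W \to \R^m$ by $T(w) = Uw$. If $Uw = 0$ for some $w \in W$, then $w$ is orthogonal to every row of $U$, hence $w \perp W$; combined with $w \in W$ this forces $w = 0$. So $T$ is injective, and hence $\dim T(S) = \dim S$ for any subspace $S \subseteq W$. Now, by construction every row of $V'$ lies in $W$, so the row space of $V'$ is a subspace of $W$, and the column space of $M = U(V')^T$ is exactly $T(\mathrm{rowspace}(V'))$. Injectivity of $T$ then yields $r = \mathrm{rank}(M) = \dim \mathrm{rowspace}(V') = \mathrm{rank}(V')$, closing the argument.

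I do not anticipate any serious obstacle here: the only nontrivial step is recognizing that restricting multiplication by $U$ to $W$ gives an injective map, which is what lets us transfer a rank statement about $M$ back into one about $V'$. Everything else is routine bookkeeping with row/column spaces.
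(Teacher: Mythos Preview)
Your proof is correct and follows essentially the same approach as the paper: both arguments hinge on the observation that the linear map $w \mapsto Uw$ is injective when restricted to the row space of $U$, which lets you transfer the rank of $M = U(V')^T$ back to the rank of $V'$. The paper packages this by extracting basis submatrices $E$ and $F$ and arguing that $EF^T$ has rank $r_v$, whereas you phrase it directly as injectivity of a map $T$ on a subspace; these are the same idea, and your formulation is arguably cleaner.
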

\begin{proof}
Let
$r_u$ and $r_v$ denote the ranks of $U$ and $V'$, respectively. Clearly $r_v \geq r$
because $V'$ is a factor of $M$. 
%

Now pick linearly independent
subsets $E$ of $r_u$ rows of $U$ and $F$ of $r_v$ rows of $V'$. Then
$EF^T$ is a submatrix of $M$, and thus must have rank at most $r$.
However, we will argue below that the rank of $EF^T$ must be equal to $r_v$,
which shows that $r_v \leq r$.  We already know that $r_v \geq r$,
which implies that $r_v = r$ as desired.

To see that rank of $EF^T$ is $r_v$, recall that the row space of $V'$
is contained in the row space of $U$.  Since $E$ and $F$ form a basis for
row-spaces of $U$ and $V'$,  row space of $F$ is contained in the row
space of $E$.  So for any vector $x$, $x^TF$ lies in the row space of
$F$, and thus in the row space of $E$. This implies that whenever $x^TF$
is nonzero, $E(F^Tx)$ must be nonzero as well.  So the rank of $EF^T$ is the same as the rank of $F^T$, i.e. $r_v$. 
\end{proof}

Now since $V'$ has rank $r$, there is a factorization $V' = YB$, where
the rows of $B$ are an $r$-dimensional orthonormal basis for the row
space $V'$, and $Y$ is the matrix whose rows are the coordinates for
the rows of $V'$ in the basis $B$. Note that since $B$ is orthonormal,
the norms of rows of $Y$ are the same as the norms of rows of $V'$.
Let $X = UB^T$. Note that the rows of $X$ are given by the the
projections of the rows of $U$ onto the basis $B$. Since $B$ is
orthonormal, this projection cannot increase the norms of the rows.

Observe that $XY^T = UB^T(V')^T = M$.  Moreover $XY^T$ is an
$r$-dimensional factorization for $M$ because $B$ is an
$r$-dimensional basis.  Finally, the norms of rows of $X$ and $Y$ are
smaller than the norms of rows of $U$ and $V$ given by the
semidefinite program, as we only decreased the lengths at each step
described above.
\end{proof}

\section{Hardness of Matrix Completion}
In this section we will prove \theoremref{maincomplete}. The basic setup is the same as in the proof of \theoremref{maincompleteapprox}. Let $G = (V,E)$ be a graph with $|V| = n$ and $|E| = m$. Now define the partial matrix $P_G \in (\R \cup \bot)^{n \times n}$ as follows:
\begin{equation*}
  P_G(i,j) = \begin{cases} 1 &\text{if $i = j$}\\
0 &\text{if $(i,j) \in E$}\\
\bot &\text{otherwise}    
  \end{cases}.
\end{equation*}
Then, as before, if $G$ is $k$-colorable with coloring function $f: V \rightarrow [k]$, then 
\[M_f = \sum_{i \in [k]} 1_{f^{-1}(i)}1_{f^{-1}(i)}^T\]
is a rank-$k$ completion of $P_G$. Similar to the proof \theoremref{maincompleteapprox} we can also assume that $M_f$ has coherence $\mu=1$. We next prove that under some mild additional assumptions any entry-wise-approximate low-rank completion $M$ of $P_G$ yields a coloring of $G$ with few colors (as opposed to just finding a large independent set as done earlier). 
\begin{lemma}\lemmalabel{coloringlemma}
Let $G$ be a graph and define the partial matrix $P_G$ as above. Let $M$ be a rank-$r$ matrix with bounded coefficients $|M(i,j)| \leq c$ for all $i,j \in [n]$ such that $M$ approximates $P_G$, i.e. $|M(i,j) - P_G(i,j)| < \epsilon$ whenever $P_G(i,j) \neq \bot$. If $\epsilon < 1$ then there is an algorithm that colors $G$ using only
\[\chi(G) \leq \left(\frac{4\sqrt{cr}}{1-2\epsilon}\right)^{2r}\]
colors.
\end{lemma}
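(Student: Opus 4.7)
The plan is to adapt the strategy sketched in the paper's proof overview: use the good factorization from \lemmaref{Linial} to embed both the row and column factors of $M$ inside a small Euclidean ball, and then partition that ball into a small number of cells so that any two vertices assigned to the same pair of cells cannot be adjacent in $G$. Since we track the row and column sides separately, the coloring will use \emph{pairs} of cells, which is what produces the exponent $2r$ in the bound.

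First I would invoke \lemmaref{Linial} to obtain a factorization $M = XY^\trans$ in which every row $u_i$ of $X$ and every row $v_j$ of $Y$ satisfies $\|u_i\|, \|v_j\| \leq (cr)^{1/4}$. The diagonal approximation $|M(i,i)-1|<\epsilon$ yields $\iprod{u_i,v_i} \geq 1-\epsilon$, while for every edge $(i,j) \in E$ we have $\iprod{u_i,v_j} = M(i,j) < \epsilon$ since $P_G(i,j)=0$ is approximated within $\epsilon$. The key quantitative estimate is that if two indices $i,j$ satisfy $\|u_i-u_j\| \leq 2\rho$ and $\|v_i-v_j\|\leq 2\rho$ for some $\rho$ to be chosen, then Cauchy--Schwarz gives
\[
\iprod{u_i,v_j} \;\geq\; \iprod{u_i,v_i} - \|u_i\|\,\|v_j-v_i\| \;\geq\; (1-\epsilon) - 2\rho(cr)^{1/4}.
\]
Choosing $\rho = (1-2\epsilon)/(2(cr)^{1/4})$ makes this lower bound exactly $\epsilon$, contradicting the edge constraint, so such $i$ and $j$ cannot be adjacent.

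The coloring is then read off from a covering: cover the ball of radius $(cr)^{1/4}$ in $\R^r$ by balls of radius $\rho$, which a volume / cube-partition argument shows can be done with at most $N \leq (4\sqrt{cr}/(1-2\epsilon))^r$ pieces. Color each vertex $i$ by the pair $(B(u_i), B(v_i))$ recording which ball of the cover contains $u_i$ and which contains $v_i$. This uses at most $N^2 = (4\sqrt{cr}/(1-2\epsilon))^{2r}$ colors, matching the claimed bound, and the previous paragraph shows the coloring is proper. Algorithmic efficiency is immediate: \lemmaref{Linial} is computed via semidefinite programming in polynomial time, and both the covering and the color assignment are straightforward to construct.

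The main obstacle I anticipate is calibrating the covering constant so that the final bound is exactly $(4\sqrt{cr}/(1-2\epsilon))^{2r}$ and not something like $(6\sqrt{cr}/(1-2\epsilon))^{2r}$; a generic $\epsilon$-net bound of the form $(1+2R/\rho)^r$ loses a constant factor. The clean way to recover the stated constant is to circumscribe the ball of radius $(cr)^{1/4}$ by the cube $[-(cr)^{1/4},(cr)^{1/4}]^r$ and partition it into subcubes of diameter $2\rho$, so that the covering cardinality is exactly $(2(cr)^{1/4}/(2\rho/\sqrt{r}))^r$ up to the right tuning; with a slightly more careful choice of $\rho$ (absorbing the $\sqrt{r}$ into the definition) this yields the advertised $(4\sqrt{cr}/(1-2\epsilon))^{r}$ balls on each side. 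The remaining steps are routine.
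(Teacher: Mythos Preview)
Your proposal is correct and follows essentially the same route as the paper's proof: invoke \lemmaref{Linial}, bound $\iprod{u_i,v_j}$ via Cauchy--Schwarz when $v_i$ and $v_j$ fall in a common small cell, and color vertices by the pair of cells containing $(u_i,v_i)$. The paper phrases the covering as a $\delta$-net of the cube $[-(cr)^{1/4},(cr)^{1/4}]^r$ with asserted size $(2(cr)^{1/4}/\delta)^r$ and takes $\delta=(1-2\epsilon)/(2(cr)^{1/4})$, so the constant you flagged as the main obstacle is simply declared rather than carefully optimized.
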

\begin{proof}
Because the entries of $M$ are bounded by $c$, \lemmaref{Linial} shows how to find a factorization $XY^T = M$ such that, if $u_i$ and $v_i$ are the row vectors of $X$ and $Y$ respectively, for all $i,j \in [n]$, we have $\|u_i\|,\|v_j\| \leq (cr)^{1/4}$. Note that since $u_i \cdot v_i \geq 1-\epsilon$ for all $i$, this implies a corresponding lower bound of $\|u_i\|,\|v_j\| \geq (1-\epsilon)(cr)^{-1/4}$. 

Now let $W$ be a $\delta$-net of the hypercube of sidelength $2(cr)^{1/4}$ in $\mathbb{R}^r$ centered at the origin (we will pick $\delta$ sufficiently small later). In particular we can pick $W$ so that $|W| \leq (2(cr)^{1/4}/\delta)^r$. \Rnote{Changed the coloring from looking at the union to the cross product.} \eat{Now define the following coloring function $f: V \rightarrow W\times W: f(i) = (w_1,w_2)$ if $u_i,v_i \in B_{\delta}(w_1) \cup B_{\delta}(w_2)$. If $u_i$ or $v_i$ lie in multiple balls of the $\delta$-net, simply pick one arbitrarily (perhaps the closest point $w \in W$). The function $f$ must color every vertex because the $\delta$-net covers every vector. 

We prove that $f$ is a valid coloring if $\delta < (1-\epsilon)(cr)^{-1/4}$. If $i$ and $j$ are the same color, then either $u_i$ and $v_j$ lie in the same ball of radius $\delta$, or $v_j$ lies in the same ball as $v_i$. In the former case, 
\begin{align*}
|\langle u_i,v_j\rangle - \langle u_i,u_i\rangle| &= |\langle u_i,v_j - u_i\rangle| \\
&\leq \|u_i\|\cdot \|v_j-u_i\| \\
&\leq \|u_i\| \cdot \delta \\
&< \langle u_i,u_i\rangle
\end{align*}
where the last inequality follows because $\delta < \|u_i\|$. Thus $u_i \cdot v_j \neq 0$, and so $(i,j) \notin E$.}
Now define the following coloring function $f: V \rightarrow W\times W: f(i) = (w_1,w_2)$ if $u_i \in B_{\delta}(w_1)$ and $v_i \in B_{\delta}(w_2)$. If $u_i$ or $v_i$ lie in multiple balls of the $\delta$-net, simply pick one arbitrarily (perhaps the closest point $w \in W$). The function $f$ must color every vertex because the $\delta$-net covers every vector. 

We prove that $f$ is a valid coloring if $\delta < (1-2\epsilon)(cr)^{-1/4}/2$. If $i$ and $j$ are the same color, $v_j$ lies in the same ball as $v_i$. Then, 
\begin{align*}
|\langle u_i,v_i\rangle - \langle u_i,v_j\rangle| &= |\langle u_i,v_i - v_j\rangle| \\
&\leq \|u_i\|\cdot \|v_i - v_j\| \\
&\leq (cr)^{1/4}(2\delta) \\
&< (1-2\epsilon)
\end{align*}
and since $u_i \cdot v_i \geq 1-\epsilon$, we must have $|u_i \cdot v_j| > \epsilon$, so $(i,j) \notin E$. Now note that the size of the coloring is at most 
\[\chi(G) \leq |W|^2 = \left(\frac{2(cr)^{1/4}}{\delta}\right)^{2r} = \left(\frac{4\sqrt{cr}}{1-2\epsilon}\right)^{2r}\]
and this completes the proof.
\end{proof}
Finally, we can pad the output matrix with zeros just like in the previous section in order to achieve a larger number ($0.9$ fraction) of revealed entries. 
\eat{
The above reduction produces a partial matrix $P_G$ that has $|V| + |E|$ revealed entries, which could be much less than $0.9n^2$ if the graph $G$ is sparse. However, we can simply pad the matrix $P_G$ with zeros, i.e. output the $10|V| \times 10|V|$ matrix
\[P'_G = \left[\begin{tabular}{cc} $P_G$ & $0$ \\ $0$ & $0$\end{tabular}\right].\]}
Combined with \lemmaref{coloringlemma}, the above implies \theoremref{maincomplete}. \Bnote{Put the section on padding with zeros back in}
\section{CSP Reduction for \psdcompletion}

In this section, we include a CSP based hardness reduction for
\psdcompletion that is robust and applicable with noise.

Let $\phi$ be an instance of a \exactonesat with variables $x_1,\dots,x_n$ and clauses $C_1,\dots,C_m$. Recall that a partial PSD matrix is equivalent to a list of inner product constraints. We will describe our reduction in this framework. Let $x_0$ be a "reference variable" unused in $\phi$. For every variable $x \in \{x_0,\dots,x_n\}$, index a set of vectors $\{u_{(x,s)}\}_{s\in I}$ by $I = [2k] \cup \left(\binom{2k}{2}\times \{\pm 1\}\right)$. Then form the \emph{internal variable} constraints
\begin{itemize}
\item $u_{(x,s)} \cdot u_{(x,s)} = 1$ for all $s \in I$.
\item $u_{(x,i)} \cdot u_{(x,j)} = \delta_{ij}$ for $i,j \in [2k]$.
\item $u_{(x,i,j,+1)} \cdot u_{(x,i)} = u_{(x,i,j,+1)} \cdot u_{(x,j)} = \frac{1}{\sqrt{2}}$ for $(i,j) \in \binom{2k}{2}$.
\item $u_{(x,i,j,-1)} \cdot u_{(x,i)} = -u_{(x,i,j,-1)} \cdot u_{(x,j)} = \frac{1}{\sqrt{2}}$ for $(i,j) \in \binom{r_1}{2}$.
\end{itemize}
These constraints force $\{u_{(x,i)}\}_i$ to be a $2k$-dimensional orthonormal basis for any $x$, as well as
\[u_{(x,i,j,+1)} = \frac{1}{\sqrt{2}}\left(u_{(x,i)} + u_{(x,j)}\right),\] and
\[u_{(x,i,j,-1)} = \frac{1}{\sqrt{2}}\left(u_{(x,i)} - u_{(x,j)}\right).\]
Let $p: [2k] \rightarrow [2k]$ be the function $p(i) = i+1$ if $i$ is odd, and $p(i) = i-1$ if $i$ is even. Now for every variable $x \in \{x_1,\dots,x_n\}$, form the \emph{external variable} constraints
\begin{itemize}
\item $u_{(x_0,i)} \cdot u_{(x,j)} = 0$ if $j \neq p(i)$.
\item $u_{(x_0,i,p(i),+1)} \cdot u_{(x,i,p(i),+1)} = 0$ for odd $i \in [2k]$.
\item $u_{(x_0,i,j,+1)} \cdot u_{(x_0,i,j,-1)}$ for $(i,j) \in \binom{2k}{2}$ and $i,j$ odd. 
\end{itemize}
Let $C_0$ be a "reference clause" not referring to any clause of $\phi$. Index a set of vectors $\{u_C\}_C$ by $\{C_0,\dots,C_m\}$. Form the \emph{internal clause} constraints
\begin{itemize}
\item $u_{C_0} \cdot u_{C_0} = 1$.
\item $u_{C_0} \cdot u_{(x_0,2g-1)} = \frac{1}{\sqrt{k}}$ for every $g \in [k]$.

For every clause $C \in \{C_1,\dots,C_m\}$, with variables $\{x_{i_1},x_{i_2},\dots,x_{i_{k}}\}$ with signs $\{s_1,\dots,s_{k}\}$, i.e. each $s_g \in \{+1,-1\}$, 
\item $u_C \cdot u_C = 1$.
\item $u_C \cdot u_{(x_{i_g},2g)} = \frac{s_g}{\sqrt{k}}$
for every $g \in [k].$
\end{itemize}
Finally, for each clause $C \in \{C_1,\dots,C_m\}$, the \emph{external clause} constraints are 
\begin{itemize}
\item $u_{C_0} \cdot u_C = (1-2/k)$ for every $C \in \{C_1,\dots,C_m\}$.
\end{itemize}
Then we have the following theorem:
\begin{theorem}\theoremlabel{mainwitherrors}
If there is a satisfying assignment to $\phi$, then there is a set of vectors lying in $\mathbb{R}^{2k}$ satisfying the above constraints exactly. Conversely, if $\{u_{(x,s)}\}_{s\in I}$ and $\{u_{C_j}\}_{j \in [m]}$ are vectors in $\mathbb{R}^r$ that satisfy the constraints up to an additive $\pm \epsilon$, and $r \leq 4k-1$ and $\epsilon < 10^{-6}k^{-5}$, then there is a satisfying assignment to $\phi$.
\end{theorem}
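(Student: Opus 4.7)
The theorem has two directions: completeness (a satisfying assignment yields an exact rank-$2k$ solution) and soundness (an $\epsilon$-approximate rank-$r$ solution with $r\le 4k-1$ yields a satisfying assignment). Completeness is a direct construction: given $\sigma\colon\{x_1,\dots,x_n\}\to\{\pm1\}$, I would set $u_{(x_0,i)}=e_i\in\R^{2k}$ and $u_{(x_0,i,j,\pm1)}=\tfrac{1}{\sqrt{2}}(e_i\pm e_j)$; for each variable $x$ and each pair $g\in[k]$, set $u_{(x,2g-1)}=\sigma(x)\,e_{2g}$ and $u_{(x,2g)}=-\sigma(x)\,e_{2g-1}$, with sum-vectors $u_{(x,i,j,\pm)}$ built by the internal formula; for each clause $C$ with literals $s_g$ over variables $x_{i_g}$, take $u_C=\tfrac{1}{\sqrt{k}}\sum_g s_g\sigma(x_{i_g})\,e_{2g-1}$ and $u_{C_0}=\tfrac{1}{\sqrt{k}}\sum_g e_{2g-1}$. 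Every constraint but the external clause one is immediate from orthonormality of the standard basis; the external clause constraint $u_C\cdot u_{C_0}=1-2/k$ becomes $\sum_g s_g\sigma(x_{i_g})=k-2$, which is precisely the $\pm1$-arithmetic encoding of "exactly one literal of $C$ is false," i.e.\ \exactonesat satisfaction under the natural sign convention.

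\textbf{Soundness setup: rigidifying the bases.} Assume each constrained inner product holds up to additive $\epsilon$. First I rigidify the reference basis: the internal constraints for $x_0$ make the $2k\times 2k$ Gram matrix of $\{u_{(x_0,i)}\}$ within $O(k\epsilon)$ of $I$, so a symmetric orthonormalization lets me assume $\{u_{(x_0,i)}\}_{i=1}^{2k}$ is exactly orthonormal at the cost of $O(k\epsilon)$ error elsewhere; the sum-vector relations then pin $u_{(x_0,i,j,\pm)}$ within $O(\sqrt{k\epsilon})$ of $\tfrac{1}{\sqrt{2}}(u_{(x_0,i)}\pm u_{(x_0,j)})$. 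Next I rigidify each variable basis. Writing $u_{(x,j)}=\alpha_{x,j}\,u_{(x_0,p(j))}+v_{x,j}$ with $v_{x,j}$ orthogonal to $\text{span}\{u_{(x_0,i)}\}_{i=1}^{2k}$, the first external constraint forces $\|v_{x,j}\|=O(\sqrt{k\epsilon})$ (since $u_{(x,j)}$ is $\epsilon$-orthogonal to the other $2k-1$ reference vectors), and unit-norm then puts $\alpha_{x,j}$ within $O(\sqrt{k\epsilon})$ of some sign $t_{x,j}\in\{\pm1\}$. The second external constraint, combined with the internal orthogonality $u_{(x,i)}\cdot u_{(x,p(i))}=0$, couples $t_{x,2g-1}$ and $t_{x,2g}$ inside each pair $g$, reducing the data of $x$ to one sign $\tau_{x,g}\in\{\pm1\}$ per pair.

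\textbf{Cross-pair coupling and the clause side.} The crux is showing $\tau_{x,g}=\tau_{x,g'}$ for all pairs $g,g'$ of a single variable, which is what the cross-pair sum-vector constraints (the third family of external variable constraints) enforce. The mechanism is that $u_{(x,i,j,+)}$, projected onto the appropriate reference sum $u_{(x_0,p(i),p(j),+)}$, has coefficient proportional to $\tau_{x,g}+\tau_{x,g'}$, so the prescribed constraint value rules out $\tau_{x,g}=-\tau_{x,g'}$ with an $\Omega(1)$ rigidity gap; propagating errors across all $\binom{k}{2}$ pair-comparisons shows that for $\epsilon<10^{-6}k^{-5}$ the signs collapse to a single $\sigma(x)\in\{\pm1\}$. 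For the clause side, an analogous orthogonalization of $u_C$ using the internal clause constraints yields $u_C$ within $O(\text{poly}(k)\sqrt{\epsilon})$ of $\tfrac{1}{\sqrt{k}}\sum_g s_g\sigma(x_{i_g})\,u_{(x_0,2g-1)}$, so the external clause constraint reads $\tfrac{1}{k}\sum_g s_g\sigma(x_{i_g})=1-2/k\pm O(\text{poly}(k)\sqrt{\epsilon})$. Since $\sum_g s_g\sigma(x_{i_g})$ is an integer of the correct parity and the error is $o(1)$ under the chosen $\epsilon$, it must equal $k-2$ exactly, so $\sigma$ satisfies every clause.

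\textbf{Main obstacle.} The bottleneck is the cross-pair consistency step. With ambient dimension $r$ as large as $4k-1$, there are up to $2k-1$ "extra" dimensions in which each variable's basis can drift, and a priori the per-pair signs $\tau_{x,g}$ are $k$ independent bits. Showing that the sum-vector constraints collapse them to a single bit with only polynomial (not exponential) error amplification in $k$ is what fixes the threshold at $\epsilon=\Theta(k^{-5})$ and is the technical heart of the argument; controlling how the $v_{x,j}$ components interact across pairs under the sum-vector rigidification is where the $k^{-5}$ (rather than, say, $k^{-2}$) comes from.
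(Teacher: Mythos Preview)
There is a genuine gap in the soundness direction, specifically in the step where you claim that ``the first external constraint forces $\|v_{x,j}\|=O(\sqrt{k\epsilon})$ (since $u_{(x,j)}$ is $\epsilon$-orthogonal to the other $2k-1$ reference vectors).'' This inference is false. By definition, $v_{x,j}$ is the component of $u_{(x,j)}$ orthogonal to the \emph{entire} reference span $\mathrm{span}\{u_{(x_0,i)}\}_{i=1}^{2k}$; near-orthogonality to the $2k-1$ reference vectors $u_{(x_0,i)}$ with $i\ne p(j)$ says nothing whatsoever about this orthogonal component. Concretely, if $r=4k-1$ one could take $u_{(x,1)},\dots,u_{(x,2k-1)}$ to be orthonormal vectors lying entirely in the $(2k-1)$-dimensional orthogonal complement of the reference span, satisfying the first external constraint exactly while having $\alpha_{x,j}=0$ and $\|v_{x,j}\|=1$. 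So the unit-norm constraint does \emph{not} force $\alpha_{x,j}\approx\pm1$, and your subsequent sign variables $t_{x,j},\tau_{x,g}$ are not well-defined.

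What is actually needed---and what your outline never invokes---is the dimension hypothesis $r\le 4k-1$. The paper's argument proceeds in the opposite order from yours: it first uses \emph{all three} families of external variable constraints to show that the $2k\times 2k$ block $Q'$ of the orthogonal map $Q$ sending $\{u_{(x_0,i)}\}$ to $\{u_{(x,i)}\}$ is, up to $O(k\delta)$, a direct sum of $k$ \emph{identical} $2\times2$ antisymmetric blocks $\bigl(\begin{smallmatrix}0&a\\-a&0\end{smallmatrix}\bigr)$ (the cross-pair sum constraints are what force the $k$ values of $a$ to coincide, not merely their signs). Only then does the dimension bound enter: the off-diagonal block $B$ of $Q$ has shape $(r-2k)\times 2k$ with $r-2k<2k$, so it has a unit null vector $x$; since $Q$ is orthogonal, $\|Q'x\|=1$, and since $\|Q'x\|\approx |a|$, this forces $|a|\approx 1$. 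In your outline the cross-pair constraints appear only as a sign-synchronization device after $|\alpha|\approx 1$ is supposedly established; in fact they are a prerequisite for establishing $|\alpha|\approx 1$ at all, and the null-vector step is the place where $r\le 4k-1$ is genuinely used.
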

\begin{proof}
To start, we prove completeness. Let $f$ be a satisfying assignment to $\phi$. Then we propose the following vectors: \begin{itemize}
\item $u_{(x_0,i)} = e_i$, where $e_i$ is the $i$th standard basis vector. 
\item For every $x \in \{x_1,\dots,x_n\}$, set $u_{(x,p(i))} = f(x)e_i$ for $i$ odd and $u_{(x,p(i))} = -f(x)e_i$ for $i$ even.  For every $i,j \in \binom{2k}{2}$, set $u_{(x,i,j,+1)}$ and $u_{(x,i,j,-1)}$ to the normalized sum and difference of $u_{(x,i)}$ and $u_{(x,j)}$. 
\item $u_{C_0} = \frac{1}{\sqrt{k}}\sum_g u_{(x_0,2g-1)}$.
\item For every $C \in \{C_1,\dots,C_m\}$ with variables $\{x_{i_1},\dots,x_{i_k}\}$ and signs $\{s_1,\dots,s_k\}$, set $u_C = \frac{1}{\sqrt{k}}\sum_g s_gu_{(x_{i_g},2g)}$.
\end{itemize}. 
It is clear that all internal constraints are satisfied. The external variable constraints are also easy to verify. For any clause $C$,
\begin{align*}
u_C \cdot u_{C_0} &= \frac{1}{k}\sum_{g,g'=1}^k s_g(u_{(x_{i_g},2g)} \cdot u_{(x_0,2g'-1)}) \\
&= \frac{1}{k}\sum_{g,g'=1}^k s_gf(x_{i_g})(e_{2g-1} \cdot e_{2g'-1}) \\
&= \frac{1}{k}\sum_{g=1}^k s_gf(x_{i_g}).
\end{align*}
Since $f$ is a satisfying assignment to $\phi$, exactly one variable in the clause is $-1$, thus the sum is exactly $(k-2)$, and so $u_C \cdot u_{C_0} = (1-2/k)$. 

To prove soundness, we will start by assuming that all internal constraints are satisfied exactly, and only the external constraints contain errors. We will decode a satisfying assignment to $\phi$ under this assumption. Then we will show how to take the initial set of vectors and adjust them slightly to get a set of vectors perfectly satisfying the internal constraints. 
\begin{lemma}
Fix $r < 4k$. For each $x \in \{x_0,x_1,\dots,x_n\}$, let $\{u_{(x,s)}\}_{s\in I}$ be a set of vectors in $\R^r$ satisfying the internal variable constraints exactly, and assume every external variable constraint is satisfied up to a small additive $\pm \delta$ such that $\delta < 1/12k$. Then for any $x \in \{x_1,\dots,x_n\}$ and odd $i,i' \in [2k]$, 
\[\text{sign}(u_{(x_0,i)} \cdot u_{(x,p(i))}) = \text{sign}(u_{(x_0,i')} \cdot u_{(x,p(i'))})\]
and
\[1 \geq |u_{(x_0,i)} \cdot u_{(x,p(i))}| \geq 1-12\delta k\]
\end{lemma}
\begin{proof}
Because the internal constraints are satisfied, $T_0 = \{u_{(x_0,i)}\}_{i \in [2k]}$ and $T_x = \{u_{(x,i)}\}_{i \in [2k]}$ are orthonormal bases, so there is an orthonormal transformation $Q: \R^r \rightarrow \R^r$ such that $Q(u_{(x_0,i)}) = u_{(x,i)}$. We write $Q$ in any basis containing $T_0$:
\[Q = \left[\begin{tabular}{cc} $Q'$ & $A$ \\ $B$ & $C$\end{tabular}\right]\]
where $Q'$ is a transformation from $T_0$ to itself. Because $|u_{(x_0,i)} \cdot u_{(x,j)}| \leq \delta$ for any $j \neq p(i)$, $Q'$ is at most $\delta$ except on the $2 \times 2$ block diagonal. Now $|u_{(x_0,i,p(i),+1)} \cdot u_{(x,i,p(i),+1)}| \leq \delta$ implies that $|Q(i,p(i)) - Q(p(i),i)| \leq 3\delta$. Finally, $|u_{(x_0,i,i',+1)} \cdot u_{(x,p(i),p(i')-1)}| \leq \delta$ for odd $i$ and $i'$ implies that $|Q(i,p(i)) - Q(i',p(i'))| \leq 3\delta$. Because of these conditions, we can write $Q' = R + S$, where 
\[R = R_1 \oplus R_2 \oplus \dots \oplus R_{k}\]
and
\[R_g = \left[\begin{tabular}{cc} $0$ & $a$ \\ $-a$ & $0$\end{tabular}\right]\text{ for every $g \in [k]$}\]
and $|S(i,j)| \leq 6\delta$ for every $i,j$. Then for any unit vector $x \in \text{span}(T_0)$, 
\[\|Q'x\| \geq \|Rx\| - \|Sx\| \geq |a| - 6\delta k.\] 
In particular, let $x$ be a null vector of $B$, which exists because $B$ is an $(r - 2k) \times 2k$ matrix, and $r < 4k$. Then 
\[Q\left[\begin{tabular}{c} $x$ \\ $0$\end{tabular}\right] = \left[\begin{tabular}{c} $Q'x$ \\ $0$\end{tabular}\right]\]
and since $Q$ is an orthogonal transformation, this implies $\|Q'x\| = 1$, and thus $|a| \geq 1-6\delta k$. Now recalling the definition of $Q$, for any odd $i$, $u_{(x_0,i)} \cdot u_{(x,p(i))} = Q(i,p(i)) = a + S(i,p(i))$, and thus for any odd $i$
\[1 \geq |u_{(x_0,i)} \cdot u_{(x,p(i))}| \geq 1 - 12\delta k\]
and the sign is the same for any odd $i$ if $\delta < \frac{1}{12\delta k}$. The upper bound follows from the fact that $u_{(x_0,i)}$ and $u_{(x,p(i))}$ are unit vectors. 
\end{proof}  
We take the interpretation that the variable $x$ is set to sign$(u_{(x_0,1)}\cdot u_{(x,2)})$. Now we prove that this assignment must be a satisfying assignment because of the clause constraints.
\begin{lemma}
For each $x \in \{x_0,\dots,x_n\}$, let $\{u_{(x,s)}\}_{s\in I}$ satisfy the assumptions of the previous lemma, and for each $C \in \{C_1,\dots,C_m\}$ with variables $\{x_{i_1},x_{i_2},\dots,x_{i_{k}}\}$ with signs $\{s_1,\dots,s_{k}\}$, let 
\[u_C = \frac{1}{\sqrt{k}}\sum_{g=1}^k s_gu_{(x_{i_g},2g)}\]
and let
\[u_{C_0} = \frac{1}{\sqrt{k}}\sum_{g=1}^k u_{(x_0,2g-1)}.\] 
Then if the constraints $u_{C_0} \cdot u_C = (1-2/k)$ are satisfied up to an additive $\pm \delta$ and
\[\delta < \min\left(\frac{2}{13k^2},\frac{2}{24k+k^2}\right)\]
then the assignment $f(x) = \text{sign}(u_{(x_0,1)} \cdot u_{(x,2)})$ is a satisfying assignment.
\end{lemma}
\begin{proof}
Let $C \in \{C_1,\dots,C_m\}$. From the definitions of $u_C$ and $u_{C_0}$,   
\begin{align*}
u_{C_0} \cdot u_C &= \frac{1}{k} \sum_{g,g'=1}^{k} s_g\left(u_{(x_0,2g'-1)} \cdot u_{(x_{i_g},2g)}\right) \\
&= \frac{1}{k}\sum_{g=1}^k s_g\left(u_{(x_0,2g-1)}\cdot u_{(x_{i_g},2g)}\right) + \frac{1}{k}\sum_{g\neq g'}^k s_g(u_{(x_0,2g'-1)} \cdot u_{(x_{i_g},2g)})
\end{align*}
where $x_{i_g}$ are the variables appearing in clause $C$. We argue that $|u_{C_0}\cdot u_C - (1-4/r_1)| < \delta$ implies that $f$ is satisfying. Note that the dot products in first have magnitudes between $1$ and $1-24\delta k$, and those in the second sum have magnitudes at most $\delta$. If $\delta < 2/(24k+k^2)$, even if $k-2$ of the dot products have sign $+1$ and $2$ have sign $-1$,
\begin{align*}
(1-4/r_1) - u_{C_0} \cdot u_C &\geq (1-4/r_1) - \frac{1}{k}\left[(k - 2) - 2(1-12\delta k) + k(k-1)\delta)\right] \\
&= 2/k - \delta(23+k) \\
&> \frac{2}{24k+k^2} \\
&> \delta
\end{align*}
Now if $\delta < 2/13k^2$, then even if all $k$ of the dot products have sign $+1$
\begin{align*}
u_{C_0} \cdot u_C - (1-4/r_1) &\geq \frac{1}{k}\left[k(1-12\delta k) - k(k-1)\delta\right] - (1-2/k) \\
&= 2/k - \delta(13k-1) \\
&> \frac{2}{13k^2} \\
&> \delta.
\end{align*} 
These two facts mean that $|u_{C_0} \cdot u_C - (1-2/k)| < \delta$ implies that exactly one of the dot products in the first sign can have sign $-1$ and the rest must have sign $+1$, i.e. that $f$ satisfies the clause $C$. This is true for every clause, so $f$ must be a satisfying assignment.
\end{proof}
These lemmas prove \theoremref{psdapproxcompletemain} if only the external variable constraints experience errors, and $u_C$ and $u_{C_0}$ are constructed properly. The next sequence of lemmas show how to transform the problem from errors on all constraints into errors on only external constraints.
\begin{lemma}
Let $\{u_{(x,s)}\}_{s \in I}$ be a set of vectors satisfying the internal constraints to within an additive $\pm \epsilon$ for $\epsilon < O(1/k)$. Then there is a set of vectors $\{\tilde{u}_{(x,s)}\}_{s\in I}$ satisfying the internal constraints exactly such that 
\[ \|\tilde{u}_{(x,s)} - u_{(x,s)} \| \leq 3\sqrt{\epsilon}.\]
\end{lemma}
\begin{proof}
Let $u_{(x,1)} = u_1^\perp + u_1^\parallel$, where $u_1^\perp$ is the part of $u_{(x,1)}$ perpendicular to the subspace $\text{span}(\{u_{(x,i)}\}_{i \neq 1})$, and $u_1^\parallel$ is the part parallel. Note that since $|u_{(x,1)} \cdot u_{(x,i)}| \leq \epsilon$ for any $i \neq 1$, 
\[\|u_1^\parallel\| \leq \sqrt{2k-1}\frac{\epsilon}{\sqrt{1-\epsilon}}.\]
Let $\tilde{u}_{(x,1)} = u_1^\perp/\|u_1^\perp\|$. Then
\begin{align*}
\|\tilde{u}_{(x,1)} - u_{(x,1)}\| &= \|u_1^\perp(1/\|u_1^\perp\| - 1) - u_1^\parallel\| \\
&\leq \left(1/\|u_1^\perp\| - 1\right)\|u_1^\perp\| + \|u_1^\parallel\| \\
&= 1 - \sqrt{\|u_{(x,1)}\|^2-\|u_1^\parallel\|^2} + \|u_1^\parallel\| \\
&\leq 1 - \sqrt{1-\epsilon} + 2\sqrt{2k-1}\frac{\epsilon}{\sqrt{1-\epsilon}} \\
&\leq \epsilon(1+2\sqrt{k})
\end{align*}
where the last inequality uses a series approximation. Now proceeding inductively, let $u_{(x,i)} = u_i^\perp + u_i^\parallel$, where the subspace considered is $\text{span}(\{\tilde{u}_{(x,j)}\}_{j < i}) \cup \text{span}(\{u_{(x,j)}\}_{j > i})$, and we obtain an identical bound on the difference of norms. Note that the $\{\tilde{u}_{(x,i)}\}_{i \in r_1}$ satisfy the internal constraints, and define the remaining vectors to be the sums forced by the remaining constraints. 

To bound the dot products of sums of basis vectors, we first bound
\begin{align*}
\Big\|\frac{1}{\sqrt{2}}(u_{(x,i)} &+ u_{(x,j)}) - u_{(x,i,j,+1)}\Big\| = \\
&=\left(\frac{1}{\sqrt{2}}(u_{(x,i)} + u_{(x,j)}) - u_{(x,i,j,+1)}\right)\cdot \left(\frac{1}{\sqrt{2}}(u_{(x,i)} + u_{(x,j)}) - u_{(x,i,j,+1)}\right)^{1/2} \\
&\leq\left(2(1+\epsilon) + \epsilon - 2\sqrt{2}(1/\sqrt{2}-\epsilon)\right)^{1/2} \\
&\leq \sqrt{\epsilon}\sqrt{3 + 2\sqrt{2}}
\end{align*}
and now
\begin{align*}
\left\Vert\tilde{u}_{(x,i,j,+1)} - u_{(x,i,j,+1)}\right\Vert &= \left\Vert\frac{1}{\sqrt{2}}(\tilde{u}_{(x,i)} + \tilde{u}_{(x,j)}) - u_{(x,i,j,+1)}\right\Vert \\
&\leq \left\Vert\frac{1}{\sqrt{2}}(\tilde{u}_{(x,i)} + \tilde{u}_{(x,j)}) - \frac{1}{\sqrt{2}}(u_{(x,i)} - u_{(x,j)})\right\Vert + \left\Vert u_{(x,i,j,+1)} - \frac{1}{\sqrt{2}}(u_{(x,i)} + u_{(x,j)})\right\Vert \\
&\leq \epsilon\sqrt{2}(1+2\sqrt{k}) + \sqrt{\epsilon}\sqrt{3+2\sqrt{2}} \\
&\leq 3\sqrt{\epsilon}
\end{align*}
where the last inequality follows because, since $\epsilon < O(1/k)$, $\epsilon\sqrt{k}$ is dominated by $\sqrt{\epsilon}$, and we simply round the coefficients to integers. Note that this also means $3\sqrt{\epsilon} > \epsilon(1+2\sqrt{k})$.  
\end{proof}

\begin{lemma}\lemmalabel{finalsumbound}
For any two $x,y \in \{x_0,\dots,x_n\}$ and $s \in I$, 
\[|\tilde{u}_{(x,s)} \cdot \tilde{u}_{(y,s)} - u_{(x,s)} \cdot u_{(y,s)}| \leq 7\sqrt{\epsilon}\]
\end{lemma}
\begin{proof}
To compress notation, let $u_{(x,s)} = u$ and $u_{(y,s)} = v$ and likewise for the tilde versions.
\begin{align*}
|\langle \tilde{u} - u,\tilde{v} - v\rangle| &= |\langle \tilde{u},\tilde{v}\rangle + \langle u,v\rangle - \langle \tilde{u},v\rangle - \langle u,\tilde{v}\rangle| \\
&\geq |\langle \tilde{u},\tilde{v}\rangle - \langle u,v\rangle| - |\langle \tilde{u},v\rangle - \langle u,v\rangle| - |\langle u,\tilde{v}\rangle - \langle u,v\rangle| \\
&\geq |\langle \tilde{u},\tilde{v}\rangle - \langle u,v\rangle| - \|\tilde{u}-u\| - \|\tilde{v}-v\| 
\end{align*}
And note that 
\[|\langle \tilde{u}-u,\tilde{v}-v\rangle| \leq \|u-\tilde{u}\|\|v-\tilde{v}\|\]
and thus
\begin{align*}
|\langle \tilde{u},\tilde{v}\rangle - \langle u,v\rangle| &\leq \|u-\tilde{u}\|\|v-\tilde{v}\| + \|u-\tilde{u}\| + \|v-\tilde{v}\| \\
&\leq 9\epsilon + 6\sqrt{\epsilon} \\
&\leq 7\sqrt{\epsilon}
\end{align*}
\end{proof}
\begin{lemma}
For each clause $C \in \{C_1,\dots,C_m\}$ containing variables $\{x_{i_1},\dots,x_{i_{r_1/2}}\}$, let $\tilde{u}_C = \frac{1}{\sqrt{k}}\sum_g s_g\tilde{u}_{(x_{i_g},2g)}$, and let $\tilde{u}_{C_0} = \frac{1}{\sqrt{k}}\sum_g \tilde{u}_{(x_{i_g},2g-1)}$. Then for any $C \in \{C_1,\dots,C_m\}$, 
\[|\langle \tilde{u}_{C_0},\tilde{u}_C\rangle - \langle u_{C_0},u_C\rangle| \leq 35\sqrt{\epsilon 2k}\]
\end{lemma}
\begin{proof}
For any clause $C$ containing variables $\{x_{i_1},\dots,x_{i_{k}}\}$, we bound the norm,
\begin{align*}
\left\Vert u_C - \frac{1}{\sqrt{k}}\sum_{g=1}^{k} s_gu_{(x_{i_g},2g)}\right\Vert &\leq \left[\left(u_C - \frac{1}{\sqrt{k}}\sum_{g=1}^{k} s_g u_{(x_{i_g},2g)}\right) \cdot \left(u_C - \frac{1}{\sqrt{k}}\sum_{g=1}^{k} s_g u_{(x_{i_g},2g)}\right)\right]^{1/2} \\
&\leq \left(1+\epsilon +k \cdot\frac{1}{k}(1+\epsilon) + \frac{1}{k} \cdot k(k-1)\epsilon - 2k\cdot\frac{1}{\sqrt{k}}(1/\sqrt{k}-\epsilon)\right)^{1/2} \\
&\leq (\epsilon(1+k+2\sqrt{k}))^{1/2} \\
&\leq \sqrt{2k\epsilon}
\end{align*}
and
\begin{align*}
\left\Vert\tilde{u}_C - \frac{1}{\sqrt{k}}\sum_{g=1}^{k} s_gu_{(x_{i_g},2g)}\right\Vert &\leq \frac{1}{\sqrt{k}}\sum_{g=1}^{k} \norm{\tilde{u}_{(x_{i_g},2g)} - u_{(x_{i_g},2g)}} \\
&\leq \epsilon(2k+\sqrt{k})
\end{align*}
and thus by triangle inequality,
\[\|\tilde{u}_C - u_C\| \leq \sqrt{2k\epsilon} + \epsilon(2k+\sqrt{k}) \leq 2\sqrt{k\epsilon}\]
because $k\epsilon < \sqrt{k\epsilon}$. The identical calculation bounds $\|\tilde{u}_{C_0} - u_{C_0}\|$. Finally, just as in the proof of Lemma \lemmaref{finalsumbound}, for any clause $C$, we have 
\begin{align*}
|\langle \tilde{u}_{C_0},\tilde{u}_{C}\rangle - \langle u_{C_0},u_{C}\rangle &\leq \|\tilde{u}_{C_0}-u_{C_0}\|\|\tilde{u}_{C} - u_{C}\| + \|\tilde{u}_{C_0}-u_{C_0}\| + \|\tilde{u}_{C} - u_{C}\| \\
&\leq 4k\epsilon + 4\sqrt{k\epsilon} \\
&\leq 5\sqrt{k\epsilon}
\end{align*}
\end{proof}
If every constraint experiences error at most $\epsilon$, then we can construct an alternate solution that satisfies the internal constraints exactly and every external constraint experiences error at most $\delta \leq 5\sqrt{k\epsilon}$. Since we require at most
\[\delta \leq \min\left(\frac{1}{24k},\frac{2}{25k^2},\frac{2}{48k+k^2}\right)\]
error on the external constraints, we can handle error at most $\epsilon < O(k^{-5})$. This completes the proof of the theorem.
\end{proof}

\end{document}